\documentclass[aps,preprint,singlecolumn,superscriptaddress,nofootinbib,floatfix]{revtex4-2}
\usepackage{amsmath,amssymb,amsthm,bm}
\usepackage{booktabs,array,graphicx,rotating}
\usepackage{adjustbox}
\newlength{\simulationtableheight}
\usepackage{microtype}
\usepackage{graphicx}
\usepackage{xcolor}
\usepackage[colorlinks=true,linkcolor=blue!50!black,citecolor=blue!50!black,urlcolor=blue!50!black]{hyperref}

\usepackage[a4paper,margin=2.4cm]{geometry}

\newcommand{\Tr}{\mathrm{Tr}}
\newcommand{\dd}{\mathrm{d}}
\newcommand{\hc}{^{\dagger}}
\newcommand{\lcr}{\lambda_{\mathrm{cr}}}
\newcommand{\ket}[1]{|#1\rangle}
\newcommand{\bra}[1]{\langle #1|}
\newcommand{\calA}{\mathcal{A}}
\newcommand{\calD}{\mathcal{D}}
\newcommand{\Dt}{D_t}

\newcommand{\n}{\mathbf n}
\newcommand{\J}{\mathbf J}
\newcommand{\bq}{\mathbf q}
\newcommand{\lamcr}{\boldsymbol\lambda_{\rm cr}}

\newcommand{\kE}[2]{|{#1},{#2}\rangle_{\n}}
\newcommand{\PE}[2]{P_{{#1},{#2}}}

\newcommand{\topstrut}{\rule{0pt}{2.9ex}}
\newcommand{\botstrut}{\rule[-1.4ex]{0pt}{0pt}}
\renewcommand{\theequation}{\arabic{equation}}
\renewcommand{\theHequation}{main.\arabic{equation}}
\newtheorem{theorem}{Theorem}
\newtheorem{lemma}{Lemma}
\newtheorem{corollary}{Corollary}
\theoremstyle{remark}
\newtheorem{remark}{Remark}
\begin{document}

\title{\Large Quantum Hamiltonian time crystals from covariant  \\  quantum Landau--Lifshitz--Gilbert relaxation}

\author{Antti J. Niemi}
\email{Antti.Niemi@su.se}
\affiliation{Nordita, Stockholm University,
Hannes Alfv\'ens v\"ag 12, SE-106 91 Stockholm, Sweden}
\affiliation{Wilczek Quantum Center,  Shanghai Institute for Advanced Studies, 
			University of Science and Technology of China, Shanghai 201315, China}


\begin{abstract}
Classical Hamiltonian time crystals are constrained energy minima that move
along symmetry orbits at frequencies set by Lagrange multipliers. We construct
quantum counterparts by minimizing the energy over density matrices with fixed
spectrum and charge expectations. A symmetry-covariant quantum
Landau--Lifshitz--Gilbert equation selects them as periodic attractors,
preserving the spectrum and fixing the multipliers dynamically. A chiral spin
trimer motivated by Cu$_3$ molecular magnets realizes this motion at a
prepared magnetization, without a Zeeman term.
\end{abstract}

\maketitle

\paragraph*{Introduction.}
The original time-crystal proposal sought persistent periodic motion in the lowest available energy state of a physical system
\cite{Wilczek2012,ShapereWilczek2012}. Most subsequent work has concerned driven or open systems. These include
discrete time crystals under periodic driving and continuous time crystals  under dissipation, both realized experimentally
\cite{Else2016,Khemani2016,Yao2017,Zhang2017,Choi2017,Iemini2018,Kongkhambut2022,Sacha2018,Zaletel2023}.
For closed Hamiltonian systems in equilibrium, no-go theorems exclude spontaneously rotating ground states and Gibbs states
\cite{Bruno2013,WatanabeOshikawa2015}.

In classical Hamiltonian systems with continuous symmetries, a different
possibility arises when energy minimization is restricted to fixed values
of conserved charges $Q_a$  \cite{DNPW2019,ADN2020}. Such a minimum is a 
relative equilibrium~\cite{MarsdenRatiu},   it is  not a critical 
point of $H$ but of $H+\lambda^aQ_a$, with
$\lambda^a$ the Lagrange multipliers. Hamilton's equation transports it along
the symmetry orbit, with frequencies given by the multipliers. When the
symmetry acts nontrivially on the minimum, this is a classical Hamiltonian
time crystal. Examples include Coleman's $Q$-balls
\cite{Coleman1985,LeePang1992}, closed molecular rings \cite{DNPW2019},
and condensate vortices~\cite{GDN2021}.

Here we carry this construction into quantum theory on a Hilbert
space $\mathcal H$. We seek a density matrix $\rho_*$ that
minimizes $\langle H\rangle=\operatorname{Tr}(\rho H)$ among
all density matrices with a prescribed spectrum and fixed
expectation values $\langle Q_a\rangle=q_a$ of conserved charges,
$[H,Q_a]=0$.  The factorization
$\rho=\widehat{\Psi}\widehat{\Psi}^\dagger$ into Uhlmann
amplitudes~\cite{Uhlmann1986,Uhlmann1991} supplies canonical
Hamiltonian variables, with the corresponding Schr\"odinger
evolution inducing the von Neumann equation for
$\rho$~\cite{Tronci2019}.
If $[H,\rho_*]\neq0$, the minimizing density matrix evolves
nontrivially under the Hamiltonian
extending the pure-state
construction of~\cite{Niemi2025}.

To identify $\rho_*$ we introduce a symmetry-covariant generalization of the 
quantum Landau--Lifshitz--Gilbert (qLLG) equation \cite{Liu2024,LandauLifshitz1935,Gilbert2004}.
Its isospectral flow decreases the energy but preserves the charge expectations and
fixes  their  Lagrange multipliers dynamically.
The relaxation selects $\rho_*$  as a time-periodic
attractor, with frequencies determined by the multipliers.

We illustrate the mechanism with a quantum trimer  of three spins
coupled by their scalar chirality. The molecular motivation is
provided by triangular $\mathrm{Cu}_3$ complexes, whose three Cu(II)
ions possess localized spin-$\tfrac12$ moments \cite{Trif2008}.
Their low-energy spin-$\tfrac12$ manifold carries a chirality
degree of freedom, proposed for encoding quantum information
in triangular spin clusters \cite{GeorgeotMila2010}.

%
%
%
%
%
%


\vskip 0.2cm

\paragraph*{Hamiltonian von Neumann equation.}
Following Uhlmann \cite{Uhlmann1986,Uhlmann1991} we introduce a $d\times r$ complex matrix $\Psi$
with $d= {\rm dim} \ \mathcal H$ and $r \ge\mathrm{rank}\,\rho$. The  decomposition $\rho=\Psi \Psi\hc $
ensures that $\rho$ is Hermitian and positive. It is also  invariant under 
a gauge transformation that sends $\Psi\to \widehat{\Psi} = \Psi V$, where $V\in U(r) $ is a $d\times r$ unitary matrix \cite{HJW1993}.  
For a time-dependent $V(t)$ define $\calA\equiv i\hbar\,\partial_t VV\hc$, $\Dt\widehat 
\Psi \equiv\partial_t {\widehat \Psi }-\tfrac{i}{\hbar}\widehat \Psi \calA$ 
and for  $G \equiv \Psi \hc \Psi $ we set $\widehat G\equiv \widehat \Psi\hc\widehat \Psi =VGV\hc$. 
With $\nu$ a Lagrange multiplier we introduce the first-order action 
\begin{eqnarray}
S=\!\int\!\dd t\Bigl\{\tfrac{i\hbar}{2}\Tr\bigl(\widehat \Psi \hc\Dt\widehat \Psi -(\Dt\widehat \Psi )\hc\widehat 
\Psi \bigr)\qquad\qquad&\nonumber\\
-\Tr(\widehat \Psi \hc H\widehat \Psi )+\nu\bigl(\Tr\widehat G-1\bigr)\Bigr\}.&
\label{eq:action}
\end{eqnarray}
The kinetic term equips the Uhlmann amplitudes with the Poisson bracket 
$\{\widehat \Psi_{i\alpha}, {\widehat \Psi}^\star_{j\beta}\}=\delta_{ij}\delta_{\alpha\beta}/i\hbar$.
Variation with respect to $\widehat{\Psi}^\dagger$
gives the Schr\"odinger-type equation
\begin{equation}
i\hbar\,\Dt\widehat{\Psi}
=(H-\nu)\widehat{\Psi}.
\label{eq:eom1}
\end{equation}
This induces the von Neumann equation for
$\rho=\widehat{\Psi}\widehat{\Psi}^\dagger$,
independently of $\nu$ and $\calA$, with 
variation of  $\nu$ imposing the
normalization,
\begin{equation}
i\hbar\dot\rho=[H,\rho],
\qquad \Tr\rho=1.
\label{eq:NN}
\end{equation}
Since $H-\nu$ acts on $\widehat{\Psi}$ from the
left, while the $U(r)$ connection $\calA$ acts
from the right, Eq.~(\ref{eq:eom1}) also implies the Gauss law
\begin{equation}
\Dt\widehat G\equiv
\dot{\widehat G}
-\frac{1}{i\hbar}[\calA,\widehat G]=0,
\label{eq:eom}
\end{equation}
It  preserves the eigenvalues
of $\widehat G$, whose nonzero eigenvalues coincide
with those of the density matrix $\rho$.
The solution 
\[
\rho(t)=e^{-iHt/\hbar}\rho_0e^{iHt/\hbar}
\]
of ~\eqref{eq:NN} therefore evolves isospectrally on the
unitary orbit $\mathcal O_{\rho_0}=\{\mathcal U\rho_0\mathcal U\hc 
\} $ of the initial state $\rho_0$, 
where $\mathcal U $ ranges over all $d\times d$ unitary matrices.

Finally, for any fixed operators  $A$ and $B$ on $\mathcal H$  with expectation values $\langle 
A\rangle=\Tr(\rho A)$, $\langle B\rangle=\Tr(\rho B)$, 
the Poisson bracket of Uhlmann amplitudes gives the Lie--Poisson bracket \cite{MarsdenRatiu,Kibble1979,GKM2005}
\begin{equation}
\{\langle A\rangle,\langle B\rangle\}=\frac{1}{i\hbar}\langle[A,B]\rangle, 
\label{eq:LP}
\end{equation}
and  \eqref{eq:NN} becomes the canonical Hamilton's equation with Hamiltonian $\langle H\rangle$.

%
%
%
%
%
%
%
%
%
%
%
%


\vskip 0.2cm

\paragraph*{The quantum  time crystal.}
The expectation values $\langle H\rangle$ and $\langle Q_a\rangle$ of the Hamiltonian and its conserved charges 
are $U(r)$-gauge-invariant functions on the canonical phase space of the Uhlmann amplitudes.
Since the von Neumann time evolution is isospectral,  following Ref.~\cite{ADN2020} we seek a quantum time crystal as a 
time-dependent minimum of $\langle H\rangle$ on an isospectral orbit of density matrices, subject to fixed charge
expectations $\langle Q_a\rangle=q_a$.
%
%
With $P=\mathrm{diag}(p_1,\dots,p_r)$ the eigenvalues of an initial $\rho_0$ and with $G=\Psi \hc \Psi=P$ imposed,  the  Lagrange
multiplier theorem \cite{MarsdenRatiu}  implies that the quantum time crystal is
a  stationary point of
\begin{align}
F=\Tr(\Psi\hc H\Psi)+\lambda^a\bigl[\Tr(\Psi\hc Q_a\Psi)-q_a\bigr]\qquad&\nonumber\\
-\Tr\bigl(\Lambda(\Psi\hc \Psi-P)\bigr),&
\label{eq:F}
\end{align}
where $\lambda^a$ and the Hermitian $r\times d$ matrix $\Lambda$ are Lagrange multipliers.
Stationarity gives $(H+\lambda^aQ_a)\Psi=\Psi\Lambda$ so that $(H+\lambda^aQ_a)\rho=\Psi\Lambda \Psi \hc=\rho\,
(H+\lambda^aQ_a)$. With $\lambda^a_{cr}$
the values of the multipliers at the minimizer $\rho_*$ its
von Neumann time evolution is
\begin{eqnarray} 
i \hbar \dot \rho_* &=& [ H , \rho_*] = - \lambda_{cr}^a [ Q_a , \rho_*] \label{eq:crit1} \\
\Rightarrow \ \ \rho_*(t) &=& e^{\frac{i}{\hbar}\lcr^aQ_at}\rho_*e^{-\frac{i}{\hbar}\lcr^aQ_at},
\label{eq:crit}
\end{eqnarray}
which is  time dependent whenever $\lcr^a[Q_a,\rho_*]\neq0$. The critical values $\lcr^a$ are time independent and
unique provided  the $[Q_a,\rho_*]$ are linearly 
independent. They are  given by $\lcr^a=-\partial E_{\min}/\partial q_a$, where $E_{\min}(q)$ is the constrained minimum energy,
i.e., the value of $F$ in \eqref{eq:F} evaluated at the minimizer, with  the constraint terms vanishing.
Being neither a ground state nor a Gibbs state of the generator $H$ of time evolution, 
this quantum time crystal evades the no-go theorems \cite{Bruno2013,WatanabeOshikawa2015}.

%
%
%
%
%
%
%
%
%
%
%


\vskip 0.2cm

\paragraph*{Covariant quantum LLG dynamics.}
The quantum Landau--Lifshitz--Gilbert equation (qLLG) is  \cite{Liu2024}  
\begin{equation}
i\hbar\dot\rho= [H+\hbar\kappa\dot\rho,\rho],
\label{eq:qLLG}
\end{equation}
with  $\kappa>0$ a dimensionless damping, and for $\kappa=0$ we recover the von Neumann equation.
This equation describes a dissipative  flow of the  density matrix towards a 
time-independent fixed point, generically an energy minimum,  with $\langle H\rangle$ as a Lyapunov function,
\[
\frac{ \mathrm d}{\mathrm dt} \langle H\rangle =-\kappa\hbar\Tr(\dot\rho^2)\le 0 .
\]
Since $H+\hbar\kappa\dot\rho$  is Hermitian the flow   is isospectral for any $\kappa$.

We seek a generalization of (\ref{eq:qLLG}) that instead of a fixed point describes dissipation 
towards a time crystal (\ref{eq:crit1}), (\ref{eq:crit}) as an attracting time periodic
trajectory.  To separate the symmetry rotation from the relaxation,
we introduce
\begin{equation}
\calD_t\rho\equiv
\dot\rho+\frac{1}{i\hbar}\lambda^a[Q_a,\rho].
\label{eq:D}
\end{equation}
where $\lambda^a(t)Q_a$ generates a rotating symmetry frame  We then generalize \eqref{eq:qLLG} to
\begin{eqnarray}
i\hbar\dot\rho &=&
[H+\hbar\kappa\calD_t\rho,\rho],
\label{eq:cov1}\\
\Leftrightarrow\quad
i\hbar\calD_t\rho &=&
[H+\lambda^aQ_a+\hbar\kappa\calD_t\rho,\rho].
\label{eq:cov}
\end{eqnarray}
The first equation has the form of a von Neumann equation with the Hermitian generator
$H+\hbar\kappa\calD_t\rho$, ensuring that the flow
preserves the spectrum and normalization of $\rho$.
The second equation expresses the same dynamics in a form which is manifestly covariant
relative to the symmetry rotation  with $H+\lambda^aQ_a$ as the Hamiltonian.

The equations (\ref{eq:cov1}), (\ref{eq:cov}) provide us with  a relaxation
scheme for constrained energy minimization: We start from a density matrix $\rho(0)$ with prescribed spectrum
and charge expectations $\langle Q_a\rangle=q_a$.  From (\ref{eq:cov1}) we get
\begin{equation}
\frac{\dd\langle Q_a\rangle}{\dd t}
=i\kappa\,\Tr\bigl(\calD_t\rho\,[Q_a,\rho]\bigr).
\label{eq:charge}
\end{equation}
We enforce charge conservation by choosing the multipliers $\lambda^a(t)$ so that the right-hand
side vanishes at every time: At fixed $\rho$, Eq.~(\ref{eq:cov}) determines $\calD_t\rho$
affinely in the multipliers, so this requirement gives a linear system for $\lambda^a$.
The solution is unique whenever the symmetry directions $[Q_a,\rho]$ are linearly independent.
With this choice, the flow remains on the constraint surface $\langle Q_a\rangle=q_a$
%
%
%
while the energy decreases according to
\begin{equation}
\frac{\dd\langle H\rangle}{\dd t}
=-\kappa\hbar\,\Tr\bigl[(\calD_t\rho)^2\bigr]\le0 ,
\label{eq:descent}
\end{equation}
This  dissipative evolution can be used to find constrained
energy minima $\rho_*$ such that
\begin{equation} 
\calD_t\rho_* =0 \ \Rightarrow \ 
[H+\lcr^aQ_a,\rho_*]=0  
\label{eq:minimum}
\end{equation}
which is the stationarity condition of the constrained functional \eqref{eq:F} that underlies Eq.~\eqref{eq:crit1}.  With 
$\lcr^a[Q_a,\rho_*]\neq0$,  the density matrix $\rho_*(t)$  evolves along the
symmetry orbit  according to Eq.~(\ref{eq:crit}).  We note that the time dependence  is different from 
the dynamical phase of an energy eigenstate:  The eigenvectors of $\rho_*$ superpose different charge sectors, so
$H$ rotates them by the symmetry, and the relative phases are observable.

%
%
%

%
%
%
%


\vskip 0.2cm

\paragraph*{Application: A quantum trimer.}
Triangular $\mathrm{Cu}_3$ molecular magnets, which contain three localized
electronic spin-$\frac12$ moments \cite{Trif2008}, 
have been proposed as hosts for chirality qubits
\cite{GeorgeotMila2010}. Motivated by this we consider a minimal model
of three spins coupled through their scalar chirality,
\begin{equation}
H=\tfrac{\hbar\omega}{2}\,\boldsymbol\sigma_1\!\cdot(\boldsymbol\sigma_2\times\boldsymbol\sigma_3),
\qquad
J_a=\tfrac12\sum_i\sigma_i^a,
\label{eq:triad}
\end{equation}
with $[H,J_a]=0$ and $[J_a,J_b]=i\epsilon_{abc}J_c$. 
The Hilbert space decomposes as $\tfrac12\otimes\tfrac12\otimes\tfrac12 =\tfrac32\oplus\tfrac12\oplus\tfrac12$. 
The Hamiltonian vanishes on the total-spin-$\frac32$ quadruplet $|\tfrac32,m\rangle$ with $m$ the eigenvalues 
of $J_3$ and takes the values
$\pm\varepsilon=\pm\sqrt3\,\hbar\omega$ on the two total-spin-$\frac12$ 
doublets $|\pm,m\rangle$ ($m=\pm\tfrac12$). Setting  $\omega>0$ selects the negative-chirality
doublet as the ground manifold. Besides the $SU(2)$ symmetry, $H$ is 
invariant under cyclic permutations of
the sites and odd under transpositions and time reversal. Up to scale it is
the unique $SU(2)$-invariant Hermitian three-spin operator that is odd under
time reversal. 
Microscopically, $H$ is a measure of  the orbital current circulating around the
triangle. 
As an effective theory $H$ can be generated by a magnetic flux through the ring at third
order in the Hubbard hopping \cite{SenChitra1995,BBMK2008}, by
Dzyaloshinskii--Moriya coupling combined with spin polarization
\cite{Trif2008}, or, with helicity-controlled sign, by circularly polarized
driving \cite{Claassen2017}.  

The spin magnetic moment $\boldsymbol\mu=-g\mu_B\langle\mathbf J\rangle$ of the trimer 
vanishes in the Gibbs state $\rho_G=e^{-\beta H}/Z$, with $\beta$ the inverse temperature.
It is distinct from the orbital moment associated with the current measured by $H$, which is a static
$SU(2)$-invariant quantity. A nonzero $\boldsymbol\mu$ must therefore be
prepared, and we prescribe $\langle\mathbf J\rangle=q\,\mathbf n$ with
$\mathbf n$ a unit vector and $q\ge0$. In Cu$_3$ spin-anisotropic interactions 
leave only the component normal to the plane of the triangle conserved~\cite{Trif2008},
and therefore we fix $\mathbf n$ along the normal. For  the shifted constraints
$G_a\equiv\langle J_a\rangle-q\,n_a$ the Lie--Poisson bracket~\eqref{eq:LP}
gives
\begin{equation}
\{G_a,G_b\}=\tfrac1\hbar\,\epsilon_{abc}\,(G_c+q\,n_c),
\qquad
\{\langle H\rangle,G_a\}=0.
\label{eq:Galg}
\end{equation}
For $q\neq0$ only $\mathbf n \cdot \mathbf G$ is first class and generates
rotations about $\mathbf n$,  the two transverse
components of $\mathbf G$ form a second-class pair rotating the direction of $\mathbf n$. 
This rotation leaves  the spectrum of $\rho$  and the energy unchanged but reorients  the Lagrange multiplier
$\boldsymbol\lambda_{\rm cr}=- [\partial E_{\min}/\partial  q] (q)\,\mathbf n $.

With the  initial state $\rho_0$ on the unitary orbit of the Gibbs state $\rho_G$, the flow
(\ref{eq:cov1}), (\ref{eq:cov}) describes states  with the thermal spectrum and fixed von Neumann entropy. 
The eigenvalues of $\rho_G$ are $p_\pm=e^{\mp\beta\varepsilon}/Z$
on the $|\pm\rangle$ doublets and $p_0=1/Z$ on the quadruplet, with
$Z=8\cosh^2(\beta\varepsilon/2)$ and
$E_G=\langle H\rangle_G=-\varepsilon\tanh(\beta\varepsilon/2)$. 

We first minimize $\langle H\rangle$ directly on the orbit subject to $G_a=0$, and 
identify the pair $(E_{\min},\rho_*)$. For this we  saturate von Neumann's trace
inequality applied to $H+\lambda\,\mathbf n\cdot\mathbf J$.
With details given in the Supplementary Material  \cite{SM} we find for the  minimum energy 
\begin{equation}
E_{\min}(q)=\varepsilon\Big[\tfrac{q}{2}-\tanh\tfrac{\beta\varepsilon}{2}\Big],
\qquad 0\le q\le\tanh\tfrac{\beta\varepsilon}{2}.
\label{eq:Emin2}
\end{equation}
and for the  minimizer  we find
\begin{equation}
\begin{aligned}
\rho_* =  \rho_G
& +(p_--p_0)\big(|a\rangle\langle a|-|{-},-\tfrac12\rangle\langle{-},-\tfrac12|\big) \\
& +  (p_0-p_+)\big(|b\rangle\langle b|-|\tfrac32,-\tfrac32\rangle\langle\tfrac32,-\tfrac32|\big)
\end{aligned}
\label{eq:family1}
\end{equation}
\begin{equation}
\begin{aligned}
& \!\!\!  |a\rangle = 
\cos\theta\left|-, -\tfrac12\right\rangle
+ e^{i\phi}\sin\theta
\left|\tfrac32, \tfrac32\right\rangle,
\ \ \   0\leq\theta\leq\frac{\pi}{2},
\\[2mm]
& \!\!\! |b\rangle = 
\cos\vartheta \left|\tfrac32, -\tfrac32\right\rangle
+ e^{i\varphi}\sin\vartheta 
\left|+, \tfrac12\right\rangle,
\ \ \  0\leq\vartheta\leq\frac{\pi}{2}.
\end{aligned}
\label{eq:family2}
\end{equation}
with ($\phi,\varphi$) arbitrary and with the prescribed
magnetization imposing
\begin{equation}
2(p_--p_0)\sin^2\theta
+
2(p_0-p_+)\sin^2\vartheta=q.
\label{eq:angle_constraint}
\end{equation}
%

For $0<q<\tanh(\beta\varepsilon/2)$ and generic ($\theta,\vartheta$) 
the commutators $[J_a,\rho_*]$ are linearly
independent. Consequently, the Lagrange multipliers are uniquely
determined as $\boldsymbol\lambda_{\rm cr}
=-\frac{\varepsilon}{2}\mathbf n$, independently of both
$\beta$ and $q$. According to~(\ref{eq:crit}), $\rho_*$
rotates about $\mathbf n$ with angular velocity
$|\boldsymbol\lambda_{\rm cr}|/\hbar
=\varepsilon/(2\hbar)=\sqrt3\,\omega/2$.
Since the two components of each superposition differ by
$\Delta m=2$, the Bloch-sphere azimuths evolve as
$\phi(t)=\phi-\varepsilon t/\hbar$ and
$\varphi(t)=\varphi-\varepsilon t/\hbar$, while the
latitudes remain constant. The two Bloch vectors thus
undergo common periodic precession with period
$T=2\pi/(\sqrt3\,\omega)$, with the latitudes and the
time-independent relative phase $\phi-\varphi$ determined by 
the initial state.

Single-spin operators cannot connect
the two components of $|a\rangle$, or of $|b\rangle$, which differ by
$\Delta m=2$. Consequently, the expectation values
$\langle\boldsymbol\sigma_i\rangle=\tfrac23 q\,\mathbf n$ are time
independent, while, for generic members of the family, 
two-spin correlations such as $\langle\sigma_i^+ \sigma_j^+\rangle>,$i $i\not=j$, oscillate in
time with the same period as $\rho_*$. 
The density matrix~(\ref{eq:family1})
therefore realizes a Hamiltonian quantum time crystal
without a Zeeman term in the Hamiltonian.

Note that a prescribed spin magnetization does not itself imply
time-dependent motion: magnetized states commuting with
$H$ remain stationary. The moving constrained minima
instead satisfy
$[H,\rho_*]=-\lambda_{\rm cr}[J_{\mathbf n},\rho_*]\neq0$.
Their motion is carried by coherences between different
charge and energy sectors, not by the presence of
a circulating current. Dephasing these coherences preserves
the energy and charge expectations but changes the
density-matrix spectrum, and therefore leaves the
isospectral constraint surface.

The two end points of \eqref{eq:Emin2} are static. At $q=0$ the minimizer
is the Gibbs state itself: $\rho_G$ minimizes $\langle H\rangle$ on its
orbit without any constraint and, having equal weights within each
multiplet of $H$, it commutes with $\mathbf J$ and already has
$\langle\mathbf J\rangle=0$. Its stationarity is in line with the no-go
theorems \cite{Bruno2013,WatanabeOshikawa2015}. At
$q=\tanh(\beta\varepsilon/2)$ both pairs are fully exchanged and $\rho_*$
is again static.

In $\mathrm{Cu}_3$,  the dominant  antiferromagnetic  Heisenberg exchange, absent from (\ref{eq:triad}) 
and, commuting with $H$ and $\mathbf J$,  shifts the spin-$\frac32$ quadruplet up by 
exchange gap $\Delta$ relative to the
centre of the two chirality doublets. For $\Delta>5\varepsilon$ the cheapest transfer through the quadruplet costs
$(\Delta-\varepsilon)/2$ per unit of $q$, more than the $2\varepsilon$ of the
transfer between the doublets, so the minimum mixes the doublets through
$$
|c\rangle=\cos\chi\,|{-},-\tfrac12\rangle+e^{i\eta}\sin\chi\,|{+},\tfrac12\rangle
, \ \ \  0\le\chi\le\tfrac{\pi}{2}
$$
The density matrix
\begin{equation}
\rho_*=\rho_G+(p_--p_+)\big(|c\rangle\langle c|-|{-},-\tfrac12\rangle\langle{-},-\tfrac12|\big)
\label{rho}
\end{equation}
carries $\langle\mathbf J\rangle=q\,\mathbf n$ with $q=(p_--p_+)\sin^2\chi$. It
is the constrained minimum for $0\le q\le p_--p_+$, with
$E_{\min}=E_G+2\varepsilon q$ and
$\boldsymbol\lambda_{\rm cr}=-2\varepsilon\,\mathbf n$, where $E_G$ and the
Gibbs weights are those of the exchange-shifted spectrum. Its coherence
between the two chirality doublets is
\begin{equation}
\langle{-},-\tfrac12|\rho_*(t)|{+},\tfrac12\rangle
=\sqrt{q(p_--p_+-q)}\,e^{i(2\varepsilon t/\hbar-\eta)}.
\label{coherence}
\end{equation}
This amplitude vanishes at both endpoints, where the density matrix
is stationary. For intermediate values of $q$,
the density matrix is a time crystal that
undergoes periodic motion with period
$T=\pi\hbar/\varepsilon$.

%
%
%
%

%
%
%
\vskip 0.2cm
\emph{Numerical verification.}  We proceed to verify these exact  trimer results 
as numerical solutions of (\ref{eq:cov1}), (\ref{eq:cov}).  We integrate  the $8\times8$ Uhlmann amplitudes in the gauge
$\mathcal A=0$, with $\nu=0$ and $\widehat{\Psi}(0)^\dagger \widehat{\Psi}(0)=P$, the prescribed spectrum.
In the instantaneous eigenbasis of $\rho$, Eq.~(\ref{eq:cov}) determines
$\mathcal D_t\rho$ algebraically and affinely in $\lambda^a$ \cite{SM}
so that \eqref{eq:charge} is a real $3\times3$ linear system for the multipliers,
solved at every stage of an adaptive eighth-order Runge--Kutta integration
with relative tolerance $10^{-10}$, at $\kappa=0.5$. Initial amplitudes are
Gaussian random matrices with orthonormalized columns scaled by $P^{1/2}$,
followed by a least-squares correction to
$\langle\mathbf J\rangle=q\,\mathbf n$ that preserves $\widehat{\Psi}^\dagger \widehat{\Psi}=P$; no
information about the minimizers enters. The spectrum, hence the entropy,
and $\langle\mathbf J\rangle$ stay constant to $10^{-11}$ along every run;
all runs are tabulated in \cite{SM}.

On Gibbs orbits, for $\beta\varepsilon=\tfrac12,1,2,4$ and
$0\le q\le0.97\tanh(\beta\varepsilon/2)$, every run converges to the family
\eqref{eq:family1}--\eqref{eq:angle_constraint}: the populations outside the two pairs of \eqref{eq:family2} and all
coherences other than the two pair coherences return to their Gibbs values,
the angles read off the populations of $|\tfrac32,\tfrac32\rangle$ and
$|{+},\tfrac12\rangle$ satisfy \eqref{eq:angle_constraint}, and the energies agree with \eqref{eq:Emin2} to
$10^{-9}\varepsilon$ [Fig.~1(a)]. Runs with the same $q$ end at different
$(\theta,\vartheta)$, as the degeneracy of the family implies, and the fully
exchanged end points lie on the line $E=-\varepsilon q/2$ for every $\beta$,
since $E_G=-\varepsilon\,q_{\max}$. For $q>0$ the multipliers converge to
$\boldsymbol\lambda_{\rm cr}=-\tfrac{\varepsilon}{2}\mathbf n$ to relative
accuracy $10^{-6}$, both azimuths advance at the rate $-\varepsilon/\hbar$ with
constant difference $\phi-\varphi$,
$\langle\boldsymbol\sigma_i\rangle=\tfrac{2q}{3}\mathbf n$ on all sites, and
$\langle\sigma_1^+\sigma_2^+\rangle$ oscillates with period $2\pi/\sqrt3\,\omega$
[Fig.~1(c), top]. At $q=0$ the runs reach $\rho_G$ itself,
$\|\rho-\rho_G\|<10^{-6}$, while the multiplier, which \eqref{eq:charge} no longer fixes
there, settles on the sphere $|\boldsymbol\lambda|=\varepsilon/2$ in a direction
set by the initial state. The covariant velocity decays exponentially in all
cases, at rates $0.03$--$0.05\,\omega$ for $\beta\varepsilon=1$ and more slowly
at lower temperature, whereas the physical velocity $\|\dot\rho\|$ stays
finite for $q>0$ and vanishes only at $q=0$ [Fig.~1(b)]. 

With the exchange shift ($\Delta=6\varepsilon$, $\beta\varepsilon=1$,
$0\le q\le0.8\,(p_--p_+)$) every sampled run ends on the doublet-mixing
branch (\ref{rho}): energy, multiplier, mixing angle and coherence agree
with $E_G+2\varepsilon q$, $-2\varepsilon\,\mathbf n$,
$\sin^2\chi=q/(p_--p_+)$ and (\ref{coherence}) to $10^{-9}$, the local
moments precess as derived above, with the transverse amplitude,
$2\pi/3$ phase lags, vanishing transverse sum and period
$\pi\hbar/\varepsilon$ reproduced to the same precision
[Fig.~1(c), bottom].  At $q=0$ the sampled runs again reach $\rho_G$, along paths on which
$|\boldsymbol\lambda|\to2\varepsilon$.



%
%
%
%

\begin{figure*}[t]
\includegraphics[width=\textwidth]{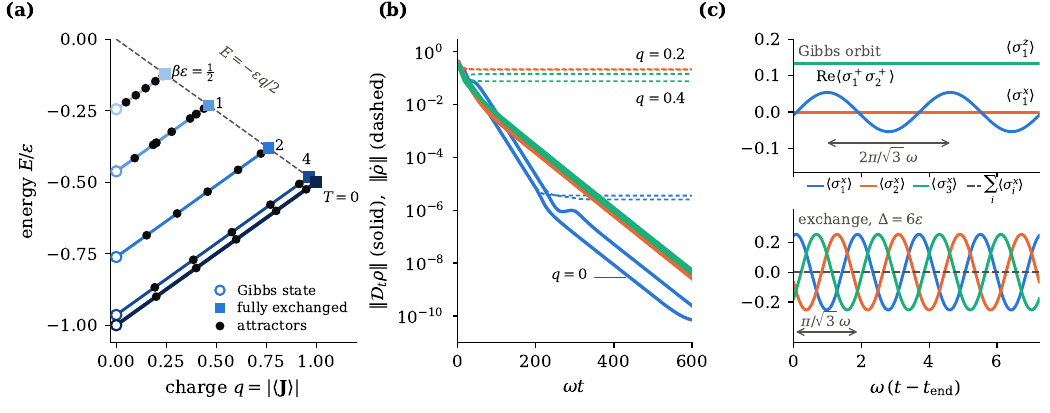}
\caption{Covariant relaxation of the quantum trimer, $\kappa=0.5$,
$\varepsilon=\sqrt3\,\hbar\omega$. (a)~Lines: $E_{\min}(q)$, Eq.~\eqref{eq:Emin2}, for
$\beta\varepsilon=\tfrac12,1,2,4$ and $T=0$; dots: energies of the attractors
reached from random initial states; circles and squares: the static end
points $q=0$ and $q=\tanh(\beta\varepsilon/2)$; dashed: $E=-\varepsilon q/2$.
(b)~Covariant velocity $\|\mathcal D_t\rho\|$ (solid) and physical velocity
$\|\dot\rho\|$ (dashed) for $\beta\varepsilon=1$ and $q=0,\,0.2,\,0.4$, two
runs each. (c)~Observables after relaxation at $\beta\varepsilon=1$, $q=0.2$.
Top, Gibbs orbit: $\langle\sigma_1^z\rangle$, $\langle\sigma_1^x\rangle$ and
$\mathrm{Re}\langle\sigma_1^+\sigma_2^+\rangle$; the arrow marks the period
$2\pi/\sqrt3\,\omega$. Bottom, exchange-dominated orbit, $\Delta=6\varepsilon$:
$\langle\sigma_i^x\rangle$ for the three sites and their sum (dashed); the
arrow marks the period $\pi/\sqrt3\,\omega$.}\label{fig:gibbs}
\end{figure*}



%
%
%
\paragraph*{Discussion.}
We have defined Hamiltonian quantum time crystals as time-dependent density
matrices that minimize the energy at fixed spectrum and fixed expectation
values of the conserved charges. We have constructed them as periodic attractors
of a symmetry-covariant quantum Landau--Lifshitz--Gilbert equation. Its
classical counterpart is the standard tool for magnetization
relaxation~\cite{LandauLifshitz1935,Gilbert2004}, and for a single
spin-$\tfrac12$ Eq.~\eqref{eq:qLLG} reduces to it with Gilbert damping
$\kappa$~\cite{Liu2024}. Equations~\eqref{eq:cov1}--\eqref{eq:cov} extend it to interacting spins,
with the spectrum of $\rho$ in place of the magnetization length and the
symmetry rotation exempted from damping, so that generic initial states flow
onto uniformly rotating orbits. These relative
equilibria~\cite{MarsdenRatiu} are neither ground states nor Gibbs states and
do not contradict the no-go theorems~\cite{Bruno2013,WatanabeOshikawa2015}.
As in the classical case, the relaxation is phenomenological rather than
thermalization, and its microscopic origin remains open.

Our example is a chiral spin trimer. It is motivated by antiferromagnetic
$\mathrm{Cu}_3$ molecular magnets~\cite{Trif2008,GeorgeotMila2010} and
quantizes the classical triangular time crystal of Ref.~\cite{DNPW2019}.
A prepared magnetization makes it rotate without a Zeeman term. With
chirality coupling alone, the single-spin expectation values stay constant
and only two-spin correlations oscillate. With dominant exchange, the local
spins themselves precess, while the total magnetization stays fixed. The
chirality is not needed for the motion itself: a Heisenberg trimer on its
Gibbs orbit also has a moving minimum, with the frequency set by the
exchange gap. The chirality selects the frequency and the observable that
carries it.

The construction applies to any spin triangle that conserves its total
spin, or one component of it. Candidates are other molecular spin
triangles~\cite{Trif2010}, such as $\{\mathrm{V}_{15}\}$ and
$\{\mathrm{Fe}_3\}$~\cite{Chiorescu2000,Boudalis2018}, triangular triple
quantum dots~\cite{HsiehHawrylak2010}, and three-qubit loops with synthetic
flux~\cite{Roushan2017,Lienhard2020}, where site-resolved measurements in suitable spin bases could
resolve the oscillating two-spin correlations. Coupled trimers could test
whether the temporal order becomes collective.

\section*{Acknowledgements}
This research is supported by the Swedish Research Council under Contract No. 2022-04037 
and by COST Action CA23134 (POLYTOPO).  

The simulation codes  were written with
the assistance of large language models (both Anthropic Claude and OpenAI models) under the
author's direction and are available  in a single repository at \url{https://github.com/niemiantti/qLLG} 
 All runs were performed by the author on local hardware, and all code, analyses and figures
were checked by the author, who takes full responsibility for them. AI tools were  also used in support of
theoretical analyses and for language  editing of the manuscript.

\clearpage

\setcounter{equation}{0}
\setcounter{section}{0}
\setcounter{table}{0}
\renewcommand{\theequation}{S\arabic{equation}}
\renewcommand{\thesection}{S\arabic{section}}
\renewcommand{\thetable}{S\arabic{table}}
\renewcommand{\theHequation}{supp.\arabic{equation}}
\renewcommand{\theHsection}{supp.\arabic{section}}
\renewcommand{\theHtable}{supp.\arabic{table}}

\begin{center}
{\Large Supplemental Material }
\end{center}

\noindent
This Supplement gives complete derivations for the spin-trimer section of
the Letter: the Poisson brackets of the Uhlmann amplitudes, the algebra of
the shifted constraints and why the Lagrange multiplier is parallel to the
charge (Sec.~\ref{sec:constraints}); von Neumann's trace inequality with its
equality conditions (Sec.~\ref{sec:trace}); the Lagrange-multiplier bound
and the selection of the multiplier $\lambda_{\rm cr}=-\varepsilon/2$ by a
level crossing (Sec.~\ref{sec:bound}); the complete set of minimizers, the
charge window and the result
$E_{\min}(q)=\varepsilon[q/2-\tanh(\beta\varepsilon/2)]$
(Sec.~\ref{sec:minimizers}); the motion of the minimizing family and its
observable signatures (Sec.~\ref{sec:motion}); the limiting cases
(Sec.~\ref{sec:limits}); the exchange-dominated branch with its precessing
local moments (Sec.~\ref{sec:exchange}); and the numerical method with all
runs of the Letter tabulated (Sec.~\ref{sec:numerics}). Throughout,
$\varepsilon=\sqrt3\,\hbar\omega$, and in all numerical values
$\hbar=\omega=1$ and $\kappa=0.5$. Unless a limit is stated explicitly we
assume $\omega>0$ and $0<\beta<\infty$, so that the Gibbs weights obey
$p_->p_0>p_+$; the limits $\beta\to\infty$ and $\beta\to0$ are treated in
Sec.~\ref{sec:limits}. The $J_a$ are dimensionless, $\hbar J_a$ being the
physical angular momentum, and entropies are measured in units of $k_B$.

\section{Model, spectrum and joint eigenbasis}
\label{sec:model}

The trimer Hamiltonian \eqref{eq:triad} of the Letter and its conserved charges are
\begin{equation}
H=\tfrac{\hbar\omega}{2}\,\boldsymbol\sigma_1\!\cdot(\boldsymbol\sigma_2\times\boldsymbol\sigma_3),
\qquad J_a=\tfrac12\sum_{i=1}^3\sigma_i^a,\qquad
[H,J_a]=0,\qquad [J_a,J_b]=i\epsilon_{abc}J_c .
\label{supp:eq:model}
\end{equation}
The eight-dimensional Hilbert space decomposes into a total-spin-$\frac32$
quadruplet and two total-spin-$\frac12$ doublets. The scalar chirality
$\boldsymbol\sigma_1\!\cdot(\boldsymbol\sigma_2\times\boldsymbol\sigma_3)$
vanishes on the quadruplet and takes the values $\mp2\sqrt3$ on the two
doublets, so that
\begin{equation}
E_-=-\varepsilon,\qquad E_Q=0,\qquad E_+=+\varepsilon,\qquad
\varepsilon=\sqrt3\,\hbar\omega ,
\end{equation}
on the negative-chirality doublet $\ket{-}$, the quadruplet and the
positive-chirality doublet $\ket{+}$, respectively.

For a unit vector $\n$ we use the joint eigenbasis $\kE{\mu}{m}$ of $H$ and
$\n\!\cdot\!\J$, which exists because the two operators commute. The first
label is a multiplet label, $\mu\in\{-,\tfrac32,+\}$: the two chirality
doublets are distinct copies of total spin $\tfrac12$ and are distinguished
by their chirality $\mu=\pm$, while $\mu=\tfrac32$ is the quadruplet. The
eight states, with their energies and magnetic quantum numbers along $\n$,
are
\begin{equation}
\begin{array}{c|cccccccc}
 & \kE{-}{+\tfrac12} & \kE{-}{-\tfrac12} & \kE{\tfrac32}{\tfrac32} & \kE{\tfrac32}{\tfrac12}
 & \kE{\tfrac32}{-\tfrac12} & \kE{\tfrac32}{-\tfrac32} & \kE{+}{+\tfrac12} & \kE{+}{-\tfrac12}\botstrut\\ \hline
\topstrut E & -\varepsilon & -\varepsilon & 0 & 0 & 0 & 0 & +\varepsilon & +\varepsilon\\
m & +\tfrac12 & -\tfrac12 & +\tfrac32 & +\tfrac12 & -\tfrac12 & -\tfrac32 & +\tfrac12 & -\tfrac12
\end{array}
\label{supp:eq:levels}
\end{equation}
Explicitly, with $\n=\hat{\mathbf z}$ and $\omega_3=e^{2\pi i/3}$,
\begin{equation}
\kE{\tfrac32}{\tfrac32}=\ket{\uparrow\uparrow\uparrow},\qquad
\kE{-}{-\tfrac12}=\tfrac{1}{\sqrt3}\big(\ket{\downarrow\downarrow\uparrow}
+\omega_3\ket{\downarrow\uparrow\downarrow}+\omega_3^2\ket{\uparrow\downarrow\downarrow}\big),
\label{supp:eq:explicit}
\end{equation}
the latter having negative chirality for the orientation of \eqref{supp:eq:model}. The
remaining states follow by the ladder operators with the Condon--Shortley
phases, $\kE{\mu}{m+1}\propto J_+\kE{\mu}{m}$ with a positive coefficient,
and the positive-chirality doublet by the exchange
$\omega_3\leftrightarrow\omega_3^2$ of the site phases. These conventions
fix the phases of the observables in Secs.~\ref{sec:motion} and
\ref{sec:exchange}.

The cyclic permutation $C_3$ of the sites commutes with $H$ and with $\J$.
The quadruplet states are symmetric under $C_3$ and the two chiral doublets
are eigenstates of $C_3$ with eigenvalues $\omega_3^{\pm1}$. This gives a
lemma used repeatedly below.

\begin{lemma}[single-spin expectations]
\label{lem:single}
In every joint eigenstate, $\bra{\mu,m}\boldsymbol\sigma_i\ket{\mu,m}_{\n}=\tfrac{2m}{3}\,\n$
for each site $i=1,2,3$.
\end{lemma}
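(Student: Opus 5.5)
The plan is to combine the cyclic symmetry $C_3$ with the Wigner--Eckart theorem. Let $\ket{\psi}=\kE{\mu}{m}$ be a joint eigenstate. We have $C_3\ket{\psi}=c\ket{\psi}$ with $|c|=1$: $c=1$ on the quadruplet and $c=\omega_3^{\pm1}$ on the doublets. Conjugation by $C_3$ permutes the sites, $C_3\boldsymbol\sigma_iC_3^{\dagger}=\boldsymbol\sigma_{i+1}$ (indices mod 3). Therefore
\[
\bra{\psi}\boldsymbol\sigma_{i+1}\ket{\psi}=\bra{\psi}C_3\boldsymbol\sigma_iC_3^{\dagger}\ket{\psi}=|c|^2\bra{\psi}\boldsymbol\sigma_i\ket{\psi}=\bra{\psi}\boldsymbol\sigma_i\ket{\psi},
\]
so the single-spin expectations are the same for all three sites. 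Since $\sum_i\boldsymbol\sigma_i=2\J$, each equals $\tfrac23\bra{\psi}\J\ket{\psi}$.

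It then remains to show that $\bra{\psi}\J\ket{\psi}=m\,\n$. The component along $\n$ is $m$ by definition of the joint eigenbasis. For the transverse part, choose unit vectors $\mathbf e_1,\mathbf e_2$ orthogonal to $\n$ and write $J_\pm=\mathbf e_1\!\cdot\!\J\pm i\,\mathbf e_2\!\cdot\!\J$. These operators shift $\n\!\cdot\!\J$ by $\pm1$, so $\bra{\psi}J_\pm\ket{\psi}=0$, and the transverse expectation vanishes. Combining this with the site equality gives $\bra{\psi}\boldsymbol\sigma_i\ket{\psi}=\tfrac{2m}{3}\n$.

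The only step needing care is that the joint eigenstates are indeed $C_3$ eigenstates for every $\n$, not only for $\n=\hat{\mathbf z}$. This holds because $C_3$ commutes with $\J$. Each $\kE{\mu}{m}$ is obtained from the $\hat{\mathbf z}$ basis of \eqref{supp:eq:explicit} by a global rotation $e^{-i\alpha\,\mathbf k\cdot\J}$, possibly up to a phase, and such a rotation commutes with $C_3$. The multiplet labels $\mu$ are $C_3$ sectors (eigenvalues $\omega_3^{\pm1}$ on the doublets, $1$ on the quadruplet), and the rotation keeps each state inside its sector.

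Alternatively, $\mu$ and $m$ jointly determine the state up to a phase, because each $(\mu,m)$ occurs exactly once in table \eqref{supp:eq:levels}. So $\kE{\mu}{m}$ is a simultaneous eigenvector of the commuting set $H$, $\n\!\cdot\!\J$, $C_3$: the state is fixed by $H$ and $\n\!\cdot\!\J$, and since $C_3$ commutes with both it preserves the one-dimensional joint eigenspace, so the state is a $C_3$ eigenvector. This second argument seems cleanest, and it is the one I would use.
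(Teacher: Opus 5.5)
Your proof is correct and is essentially the paper's argument: $C_3$ invariance of each joint eigenstate (up to a phase) makes the three site expectations equal, the sum rule $\sum_i\boldsymbol\sigma_i=2\J$ gives the value $\tfrac{2m}{3}$ along $\n$, and the $\Delta m=\pm1$ selection rule removes the transverse part. The differences are cosmetic: the paper applies the selection rule directly to the transverse components of each $\boldsymbol\sigma_i$, rather than to $J_\pm$ after averaging over sites, and it simply asserts the $C_3$-eigenstate property, which your non-degeneracy argument justifies (each $(\mu,m)$ occurs once, and $C_3$ commutes with $H$ and $\n\!\cdot\!\J$).
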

\begin{proof}
The components of $\boldsymbol\sigma_i$ transverse to $\n$ change $m$ by
$\pm1$ and have no diagonal elements. For the component along $\n$,
$C_3^\dagger\,\sigma_i^{\n}\,C_3=\sigma_{i+1}^{\n}$ and $C_3\kE{\mu}{m}$ is
$\kE{\mu}{m}$ up to a phase, so the diagonal expectation is the same on all
three sites. Their sum is $\bra{\mu,m}2\,\n\!\cdot\!\J\ket{\mu,m}=2m$.
\end{proof}

\section{Poisson brackets, shifted constraints and the direction of the multiplier}
\label{sec:constraints}

\paragraph{Poisson bracket of the Uhlmann amplitudes.}
The kinetic term of the action \eqref{eq:action} of the Letter,
$\frac{i\hbar}{2}\Tr(\widehat{\Psi}^\dagger  D_t \widehat{\Psi} -(D_t\widehat{\Psi})^\dagger\widehat{\Psi})$, is first
order in time derivatives and makes the matrix elements of $\widehat{\Psi}$ and of
$\widehat{\Psi}^\star$ canonically conjugate,
\begin{equation}
\{\widehat{\Psi}_{i\alpha},\widehat{\Psi}^\star_{j\beta}\}=\frac{1}{i\hbar}\,\delta_{ij}\delta_{\alpha\beta},
\qquad
\{\widehat{\Psi}_{i\alpha}, \widehat{\Psi}_{j\beta}\}=\{\widehat{\Psi}^\star_{i\alpha},\widehat{\Psi}^\star_{j\beta}\}=0 ,
\label{supp:eq:PBW}
\end{equation}
with $i,j=1,\dots,d$ the Hilbert-space index and $\alpha,\beta=1,\dots,r$
the gauge index. For a fixed operator $A$ the expectation value
$\langle A\rangle=\Tr(\rho A)=\sum_{i,j,\alpha}\widehat{\Psi}^\star_{j\alpha}A_{ji}\widehat{\Psi}_{i\alpha}$
is a quadratic function of the amplitudes, and the chain rule gives for two
such functions
\begin{equation}
\begin{aligned}
\{\langle A\rangle,\langle B\rangle\}
&=\frac{1}{i\hbar}\sum_{i,\alpha}\Big[\frac{\partial\langle A\rangle}{\partial\widehat{\Psi}_{i\alpha}}
\frac{\partial\langle B\rangle}{\partial\widehat{\Psi}^\star_{i\alpha}}
-\frac{\partial\langle B\rangle}{\partial\widehat{\Psi}_{i\alpha}}
\frac{\partial\langle A\rangle}{\partial\widehat{\Psi}^\star_{i\alpha}}\Big]\\
&=\frac{1}{i\hbar}\Big[\Tr(\widehat{\Psi}^\dagger AB\widehat{\Psi})-\Tr(\widehat{\Psi}^\dagger BA\widehat{\Psi})\Big]
=\frac{1}{i\hbar}\,\langle[A,B]\rangle ,
\end{aligned}
\label{supp:eq:LP}
\end{equation}
which is the Lie--Poisson bracket \eqref{eq:LP} of the Letter. It is $U(r)$ gauge
invariant, as it must be, since $\langle A\rangle$ depends on $\widehat{\Psi}$ only
through $\rho=\widehat{\Psi}\widehat{\Psi}^\dagger$.

\paragraph{Charge algebra.}
With $A=J_a$, $B=J_b$ and the $su(2)$ commutators of \eqref{supp:eq:model},
\begin{equation}
\{\langle J_a\rangle,\langle J_b\rangle\}=\tfrac1\hbar\,\epsilon_{abc}\langle J_c\rangle,
\qquad \{\langle H\rangle,\langle J_a\rangle\}=0 .
\label{supp:eq:Jalg}
\end{equation}
We prescribe $\langle\J\rangle=q\,\n$ with $q\ge0$ and introduce the
shifted constraints $G_a\equiv\langle J_a\rangle-q\,n_a$. Since the shift
is a constant,
\begin{equation}
\{G_a,G_b\}=\tfrac1\hbar\,\epsilon_{abc}\,(G_c+q\,n_c),\qquad
\{\langle H\rangle,G_a\}=0 ,
\label{supp:eq:Galg}
\end{equation}
which is Eq.~\eqref{eq:Galg} of the Letter. On the constraint surface $G_a=0$ the
matrix of brackets is $C_{ab}=\{G_a,G_b\}|_{G=0}=(q/\hbar)\,\epsilon_{abc}n_c$.
In Dirac's terminology two cases arise.

\emph{$q=0$.} $C_{ab}=0$: all three constraints are first class and close
into $su(2)$. Their Hamiltonian vector fields generate the rotations
$\rho\to U\rho U^\dagger$, $U=e^{-i\boldsymbol\theta\cdot\J}$, which map the
constraint surface onto itself. At rotationally invariant states such as
$\rho_G$ all three vector fields vanish, so the zero-charge level is
singular there: the Poisson-algebra classification remains valid, but
regular constrained reduction, and a unique multiplier, cannot be inferred
at that stratum.

\emph{$q\neq0$.} $C_{ab}$ has rank two with null vector $n_a$. The
combination $\n\!\cdot\!\mathbf G$ is first class; it generates the
rotations about $\n$, which leave $\langle\J\rangle=q\n$ invariant and are
therefore tangent to the constraint surface. With $(\mathbf e_1,\mathbf
e_2,\n)$ a right-handed frame, the transverse pair obeys
$\{\mathbf e_1\!\cdot\!\mathbf G,\mathbf e_2\!\cdot\!\mathbf G\}\approx q/\hbar$
and is second class: a rotation about a transverse axis tilts the charge
and leaves the constraint surface.

\paragraph{The multiplier is parallel to the charge.}
Let $\mathcal O$ be the isospectral orbit of $\rho(0)$ and let
$\mathcal F(\bq)=\{\rho\in\mathcal O:\Tr(\rho J_a)=q_a\}$ be the feasible
set at charge vector $\bq$. For a rotation $U=e^{-i\boldsymbol\theta\cdot\J}$
with $SO(3)$ matrix $R$: (i) $U\rho U^\dagger\in\mathcal O$; (ii)
$\Tr(U\rho U^\dagger H)=\Tr(\rho H)$ because $[H,U]=0$; (iii)
$\Tr(U\rho U^\dagger J_a)=R_{ab}\Tr(\rho J_b)$. Hence $U$ maps
$\mathcal F(\bq)$ bijectively onto $\mathcal F(R\bq)$ without changing the
energy, and
\begin{equation}
E_{\min}(R\bq)=E_{\min}(\bq)\quad\Longrightarrow\quad
E_{\min}=E_{\min}(q),\qquad q=|\bq| .
\end{equation}
Therefore $\partial E_{\min}/\partial q_a=E_{\min}'(q)\,n_a$ wherever
$E_{\min}$ is differentiable. For $q>0$, on a differentiable minimizing
branch with regular constraints, the unique multiplier of the Letter,
$\lambda^a_{\rm cr}=-\partial E_{\min}/\partial q_a$, is therefore
\begin{equation}
\lamcr=-E_{\min}'(q)\,\n .
\label{supp:eq:lampar}
\end{equation}
At static or singular members this remains a valid multiplier, but it need
not be the unique solution of $[H+\boldsymbol\lambda\!\cdot\!\J,\rho_*]=0$;
in particular the one-sided derivative of $E_{\min}(q)$ at $q=0$ is not a
vector gradient (Remark~\ref{rem:cusp}). By Eq.~\eqref{eq:crit} of the Letter the
relative equilibrium evolves as
$\rho_*(t)=e^{\frac{i}{\hbar}\lamcr\cdot\J\,t}\rho_*e^{-\frac{i}{\hbar}\lamcr\cdot\J\,t}$,
a rotation about $\n$ with angular velocity $E_{\min}'(q)/\hbar$. The motion
of the time crystal is thus along the orbit of the first-class constraint,
as it must be: a rotation about a transverse axis would carry the state
off the constraint surface.

\section{The Gibbs orbit}
\label{sec:gibbs}

For the spectrum of $\rho(0)$ we take that of the Gibbs state
$\rho_G=e^{-\beta H}/Z$. Its eigenvalues are
\begin{equation}
p_-=\frac{e^{\beta\varepsilon}}{Z}\ \text{(twice)},\qquad
p_0=\frac{1}{Z}\ \text{(four times)},\qquad
p_+=\frac{e^{-\beta\varepsilon}}{Z}\ \text{(twice)},
\end{equation}
with $Z=2e^{\beta\varepsilon}+4+2e^{-\beta\varepsilon}=8\cosh^2(\beta\varepsilon/2)$,
and the following combinations will be needed:
\begin{equation}
\begin{gathered}
E_G=\langle H\rangle_G=-2\varepsilon(p_--p_+)=-\varepsilon\tanh\tfrac{\beta\varepsilon}{2},\\[2pt]
p_--p_0=\frac{e^{\beta\varepsilon}-1}{Z},\qquad
p_0-p_+=\frac{1-e^{-\beta\varepsilon}}{Z},\qquad
p_--p_+=\tfrac12\tanh\tfrac{\beta\varepsilon}{2}.
\end{gathered}
\label{supp:eq:pops}
\end{equation}
Equation~\eqref{eq:cov1} of the Letter is a commutator evolution with a Hermitian,
state-dependent generator and therefore preserves every eigenvalue of
$\rho$; equivalently, its amplitude lift conserves $\widehat{\Psi}^\dagger\widehat{\Psi}$ in the
gauge $\mathcal A=0$, which is the content of Eq.~\eqref{eq:eom} of the Letter. The
covariant qLLG flow thus stays on the orbit $\mathcal O_G=\{U\rho_GU^\dagger\}$
of all density matrices with the Gibbs spectrum, and the von Neumann
entropy keeps its thermal value $S(\beta)=-\sum_kp_k\ln p_k$ throughout.
The states on this orbit have the Gibbs spectrum; only $\rho_G$ itself is
the Gibbs state. The problem
solved below is
\begin{equation}
E_{\min}(q)=\min\{\Tr(\rho H):\ \rho\in\mathcal O_G,\ \Tr(\rho\J)=q\,\n\}.
\label{supp:eq:problem}
\end{equation}

\section{Von Neumann's trace inequality}
\label{sec:trace}

\begin{theorem}[von Neumann \cite{vonNeumann1937}]
\label{thm:trace}
Let $\rho\ge0$ have eigenvalues $p_1\ge p_2\ge\dots\ge p_d$ and let $K$ be
Hermitian with eigenvalues $\kappa_1\le\kappa_2\le\dots\le\kappa_d$. Then
\begin{equation}
\sum_{k=1}^dp_k\kappa_k\ \le\ \Tr(\rho K)\ \le\ \sum_{k=1}^dp_k\kappa_{d+1-k}.
\label{supp:eq:vN}
\end{equation}
\end{theorem}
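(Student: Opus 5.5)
Spectrally decompose both operators, $\rho=\sum_k p_k\ket{u_k}\bra{u_k}$ and $K=\sum_l\kappa_l\ket{v_l}\bra{v_l}$, with orthonormal eigenbases ordered as in Theorem~\ref{thm:trace}. This gives
\[
\Tr(\rho K)=\sum_{k,l}p_k\kappa_l\,D_{kl},\qquad D_{kl}=|\langle u_k|v_l\rangle|^2 .
\]
The matrix $D$ is doubly stochastic, because its entries are nonnegative and every row and column sums to one by completeness of the two bases. The problem therefore reduces to bounding the bilinear form $\sum_{k,l}p_kD_{kl}\kappa_l$ over doubly stochastic $D$.

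The bilinear form is linear in $D$. By Birkhoff's theorem the doubly stochastic matrices form the convex hull of the permutation matrices. The extrema over this polytope are thus attained at permutations $\pi$, where the form equals $\sum_k p_k\kappa_{\pi(k)}$. The rearrangement inequality then finishes the argument. Since the $p_k$ are nonincreasing, the sum is smallest when they are paired with the $\kappa$'s in increasing order, $\pi=\mathrm{id}$, which gives $\sum_k p_k\kappa_k$. It is largest when they are paired in reverse order, which gives $\sum_k p_k\kappa_{d+1-k}$.

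To prove the rearrangement step without Birkhoff, I would use Abel summation. Write $p_k=\sum_{j\ge k}(p_j-p_{j+1})$ with $p_{d+1}\equiv 0$, so that each coefficient $p_j-p_{j+1}$ is nonnegative. Then
\[
\Tr(\rho K)=\sum_j (p_j-p_{j+1})\,\Tr(\Pi_jK),
\]
where $\Pi_j$ projects onto $\mathrm{span}\{u_1,\dots,u_j\}$. By the Ky Fan / Courant--Fischer minimax principle, $\Tr(\Pi_jK)\ge\kappa_1+\dots+\kappa_j$ for every rank-$j$ projector. Summing with the nonnegative weights gives the lower bound. The upper bound follows by applying the lower bound to $-K$.

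The main obstacle is this Ky Fan inequality $\Tr(\Pi_jK)\ge\sum_{i\le j}\kappa_i$. I would prove it from the doubly substochastic structure: the numbers $w_l=\langle v_l|\Pi_j|v_l\rangle$ lie in $[0,1]$ and sum to $j$. Minimizing $\sum_l w_l\kappa_l$ under these constraints is achieved by putting unit weight on the $j$ smallest $\kappa_l$. Equality cases, needed later in the paper, occur when the eigenvectors can be chosen so that $u_k$ pairs with $v_k$, up to degeneracies.
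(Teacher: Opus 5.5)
Your main argument matches the paper's proof: you spectrally decompose both operators, observe that the overlap matrix $D_{kl}=|\langle u_k|v_l\rangle|^2$ is doubly stochastic, use Birkhoff--von Neumann to reduce to permutations, and apply the rearrangement inequality to the two extreme pairings, so the proof is correct as it stands. Your Abel-summation alternative is also correct, and it replaces both Birkhoff and the rearrangement step, not just Birkhoff: it needs only the elementary fact that a linear form on $\{w\in[0,1]^d:\sum_l w_l=j\}$ is minimized by putting unit weight on the $j$ smallest $\kappa_l$.
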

\begin{proof}
Write $\rho=\sum_kp_k\ket{k}\bra{k}$ and
$K=\sum_l\kappa_l\ket{\kappa_l}\bra{\kappa_l}$. Then
\begin{equation}
\Tr(\rho K)=\sum_{k,l}p_k\,D_{kl}\,\kappa_l,\qquad D_{kl}=|\langle k|\kappa_l\rangle|^2 .
\end{equation}
Both bases are orthonormal, so every row and every column of $D$ sums to
one: $D$ is doubly stochastic. By the Birkhoff--von Neumann theorem
\cite{Birkhoff1946} it is a convex combination of permutation matrices,
$D=\sum_\pi c_\pi\Pi_\pi$, $c_\pi\ge0$, $\sum_\pi c_\pi=1$, so that
$\Tr(\rho K)=\sum_\pi c_\pi\sum_kp_k\kappa_{\pi(k)}$ is a convex
combination of the sums $\sum_kp_k\kappa_{\pi(k)}$. By the rearrangement
inequality \cite{HLP} the smallest of these sums pairs the sequences in
opposite order and the largest pairs them in the same order, which gives
the two bounds.
\end{proof}

\begin{lemma}[equality]
\label{lem:equality}
The lower bound in \eqref{supp:eq:vN} is attained if and only if $\rho$ has an
eigenbasis $\{\ket{e_k}\}$ that is also an eigenbasis of $K$, with
$\rho\ket{e_k}=p_k\ket{e_k}$ and $K\ket{e_k}=\kappa_k\ket{e_k}$ for the
sequences ordered as in the theorem. In particular $[\rho,K]=0$. Inside a
degenerate eigenspace of $K$ the basis is arbitrary, and inside a set of
equal $p_k$ the assignment is arbitrary.
\end{lemma}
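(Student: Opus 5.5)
The plan is to reuse the decomposition from the proof of Theorem~\ref{thm:trace}: $\Tr(\rho K)=\sum_{k,l}p_kD_{kl}\kappa_l$ with $D_{kl}=|\langle k|\kappa_l\rangle|^2$ doubly stochastic. The argument then has two directions.

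\emph{Sufficiency} is immediate. If $\rho$ and $K$ share an eigenbasis $\{\ket{e_k}\}$ with $\rho\ket{e_k}=p_k\ket{e_k}$ and $K\ket{e_k}=\kappa_k\ket{e_k}$, then $\Tr(\rho K)=\sum_kp_k\kappa_k$, which is the lower bound. In that case $[\rho,K]=0$ holds trivially.

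\emph{Necessity} is the substantive direction, and I will prove it without Birkhoff--von Neumann, using Abel summation. Set $P_j=\sum_{k\le j}\ket{k}\bra{k}$, the spectral projector onto the $j$ largest eigenvalues of $\rho$, and put $p_{d+1}=0$. Then
\[
\Tr(\rho K)=\sum_{j=1}^d(p_j-p_{j+1})\,\Tr(P_jK).
\]
By Ky Fan's minimum principle, which is the Courant--Fischer statement for rank-$j$ projectors, $\Tr(P_jK)\ge\sum_{l\le j}\kappa_l$. The coefficients satisfy $p_j-p_{j+1}\ge0$, so summing recovers the lower bound. Equality in the total forces equality $\Tr(P_jK)=\sum_{l\le j}\kappa_l$ for every $j$ with $p_j>p_{j+1}$. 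These indices $j$ are exactly the boundaries of the blocks of equal eigenvalues of $\rho$.

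The step I expect to be the main obstacle is the equality case of Ky Fan's principle. It states that $\Tr(PK)$ equals the sum of the $j$ smallest eigenvalues of $K$ if and only if $\mathrm{ran}\,P$ is $K$-invariant and $K|_{\mathrm{ran}P}$ has exactly those eigenvalues. I would prove it by diagonalizing the compression $PKP$ on $\mathrm{ran}\,P$. Ky Fan is a sum of Cauchy interlacing inequalities, so equality forces the compressed eigenvalues to equal $\kappa_1,\dots,\kappa_j$ one by one. A variational argument then gives invariance: any off-diagonal block $(1-P)KP\neq0$ could be rotated away to lower $\Tr(PK)$ further, a contradiction. So for each block boundary $j$, $\mathrm{ran}\,P_j$ is $K$-invariant and $K$ acts there with spectrum $\{\kappa_1,\dots,\kappa_j\}$.

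Taking differences of consecutive boundary projectors, each eigenspace of $\rho$ belonging to a block of equal $p$'s is $K$-invariant. On each such eigenspace $K$ has precisely the $\kappa_l$ with $l$ in that index block. Diagonalizing $K$ inside each eigenspace of $\rho$ then produces the joint eigenbasis with the stated matching of labels, and hence $[\rho,K]=0$.

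The freedom statements follow from the same picture. Within a block of equal $p_k$ any $K$-eigenbasis of the block serves, so the assignment of those $p_k$ to eigenvectors is arbitrary. Within a degenerate eigenspace of $K$ lying inside one $\rho$-block, any orthonormal basis serves. Where a degenerate $K$-eigenspace straddles two $\rho$-blocks, the basis must be chosen compatibly with $\rho$. There the statement's "arbitrary" is meant up to this compatibility, and I would state that explicitly.
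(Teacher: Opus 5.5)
Your proof is correct, but it takes a different route from the paper. The paper stays inside the Birkhoff--von Neumann decomposition from the proof of Theorem~\ref{thm:trace}. Equality forces every permutation with $c_\pi>0$ to be minimizing. For distinct $p_k$ and distinct $\kappa_l$ only the identity is minimizing, so $D$ is the identity and the eigenbases coincide. The degenerate cases are then covered by the remark that permutations inside a block of equal $\kappa_l$ or equal $p_k$ are also minimizing.

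You replace this by Abel summation, $\Tr(\rho K)=\sum_j(p_j-p_{j+1})\Tr(P_jK)$, together with Ky Fan's principle and its equality case (interlacing plus a first-variation argument). This shows directly that each spectral subspace of $\rho$ is $K$-invariant and carries the right index block of $\kappa$'s.

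The paper's argument is shorter, reuses $D$, and is complete in the nondegenerate case. It is weaker when a degenerate eigenspace of $K$ straddles two blocks of equal $p_k$. There the list of minimizing permutations only constrains the support pattern of $D$, and extracting a joint eigenbasis from that pattern takes an extra step that is not written out. Your argument handles all ties uniformly. This matters because the necessity direction is exactly what Theorem~\ref{thm:main} uses for $K_{\rm cr}$, whose degenerate levels $A$ and $B$ carry unequal populations. The price is importing Ky Fan's principle and its equality case in place of the rearrangement inequality already at hand.

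On the freedom clause: it is meant as a description of the set of equality states. Every orthonormal basis of a degenerate eigenspace of $K$, with the ordered $p_k$ attached, gives an equality state. That follows at once from your sufficiency direction and needs no qualification. Your caveat about choosing the basis compatibly with $\rho$ applies only once $\rho$ is fixed, so it should not be stated as a restriction on the lemma. The straddling case you single out is precisely the one that produces the free parameters $(\theta,\phi)$ and $(\vartheta,\varphi)$ in \eqref{supp:eq:family}.
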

\begin{proof}
Equality requires every permutation with $c_\pi>0$ to be minimizing. If
all $p_k$ and all $\kappa_l$ are distinct the only minimizing permutation
is the identity, $D=\mathbb 1$, and the eigenvectors coincide. When some
$\kappa_l$ coincide, any permutation inside the degenerate block is also
minimizing, so $\ket{k}$ may be any orthonormal basis of that block; the
same holds for coincident $p_k$. Conversely, every state of the stated
form gives equality by direct evaluation.
\end{proof}

Two consequences are used in the following sections.

\begin{corollary}[passive value]
\label{cor:passive}
On the orbit $\mathcal O$ of a state with eigenvalues $p_1\ge\dots\ge p_d$,
\begin{equation}
\min_{\rho\in\mathcal O}\Tr(\rho K)=P[K]\equiv\sum_kp_k\kappa_k ,
\end{equation}
and the minimizers are the states of Lemma~\ref{lem:equality}. Following
\cite{PuszWoronowicz1978,Lenard1978} we call them passive for $K$: no
unitary lowers their $K$-expectation further.
\end{corollary}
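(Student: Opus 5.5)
The corollary follows from Theorem~\ref{thm:trace} and Lemma~\ref{lem:equality} once we show the lower bound is attained on the orbit. The plan has three steps.

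\emph{Lower bound.} Every $\rho\in\mathcal O$ has the same ordered spectrum $p_1\ge\dots\ge p_d$, since conjugation by a unitary preserves eigenvalues. The left inequality in \eqref{supp:eq:vN} then gives $\Tr(\rho K)\ge P[K]$ for all $\rho\in\mathcal O$.

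\emph{Attainment.} Let $\{\ket{\kappa_l}\}$ be an orthonormal eigenbasis of $K$, ordered so that $\kappa_1\le\dots\le\kappa_d$. Fix any $\rho_0\in\mathcal O$ with eigenbasis $\{\ket{k}\}$, where $\rho_0\ket{k}=p_k\ket{k}$ in decreasing order. Define the unitary $\mathcal U=\sum_k\ket{\kappa_k}\bra{k}$ and set $\rho=\mathcal U\rho_0\mathcal U^\dagger=\sum_kp_k\ket{\kappa_k}\bra{\kappa_k}$. This state lies in $\mathcal O$, and direct evaluation gives $\Tr(\rho K)=\sum_kp_k\kappa_k=P[K]$. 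So the minimum exists and equals $P[K]$.

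\emph{Characterization of minimizers.} A state $\rho\in\mathcal O$ is a minimizer exactly when it saturates the lower bound. By Lemma~\ref{lem:equality}, these are precisely the states sharing an eigenbasis with $K$ in which the largest $p_k$ are paired with the smallest $\kappa_k$, with freedom inside degenerate blocks of either spectrum. Conversely, every such state lies in $\mathcal O$, because it has spectrum $\{p_k\}$ and any two density matrices with the same spectrum are unitarily conjugate. It therefore attains $P[K]$.

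\emph{Passivity.} The term ``passive'' is then a restatement. For a minimizer $\rho$ and any unitary $\mathcal U$, the state $\mathcal U\rho\mathcal U^\dagger$ is again in $\mathcal O$, so $\Tr(\mathcal U\rho\mathcal U^\dagger K)\ge P[K]=\Tr(\rho K)$.

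The only step needing care is the equality characterization when degeneracies are present. That is already handled by Lemma~\ref{lem:equality}, so no real obstacle remains. The argument is essentially bookkeeping on top of Theorem~\ref{thm:trace}.
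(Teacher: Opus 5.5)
Your proof is correct and follows the same route the paper intends: the paper gives no separate proof, treating the corollary as an immediate consequence of the lower bound in Theorem~\ref{thm:trace} together with the equality characterization of Lemma~\ref{lem:equality}. Your explicit unitary $\mathcal U=\sum_k\ket{\kappa_k}\bra{k}$ simply spells out the attainment step, which the paper leaves to the converse direction of that lemma.
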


\begin{corollary}[maximal charge on an orbit]
\label{cor:cap}
The upper bound in \eqref{supp:eq:vN} with $K=\n\!\cdot\!\J$ gives, for the
Gibbs spectrum and the multiset of $m$ values in \eqref{supp:eq:levels},
\begin{equation}
\max_{\rho\in\mathcal O_G}\langle\n\!\cdot\!\J\rangle
=p_-\!\left(\tfrac32+\tfrac12\right)+p_0\!\left(\tfrac12+\tfrac12-\tfrac12-\tfrac12\right)
+p_+\!\left(-\tfrac12-\tfrac32\right)=2(p_--p_+)=\tanh\tfrac{\beta\varepsilon}{2}.
\label{supp:eq:qmax}
\end{equation}
No state on the orbit carries a larger charge, so the feasible set
\eqref{supp:eq:problem} is empty for $q>\tanh(\beta\varepsilon/2)$. This limit is
set by the spectrum alone: it is $\frac32$ for a pure state, $1$ for the
ground-doublet spectrum $(\frac12,\frac12,0,\dots)$, and $0$ for the
maximally mixed state.
\end{corollary}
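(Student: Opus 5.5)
The statement to prove is Corollary~\ref{cor:cap}. The plan is to apply the upper bound of Theorem~\ref{thm:trace} with $K=\n\!\cdot\!\J$ and $\rho$ any element of $\mathcal O_G$, and then check that the bound is attained.

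\emph{Step 1: the bound.} Every $\rho\in\mathcal O_G$ has the Gibbs spectrum. Sorted in decreasing order this is $p_-,p_-,p_0,p_0,p_0,p_0,p_+,p_+$, using $p_->p_0>p_+$ for $0<\beta<\infty$. The spectrum of $K$ is the multiset of $m$ values in \eqref{supp:eq:levels}, sorted in decreasing order: $\tfrac32,\tfrac12,\tfrac12,\tfrac12,-\tfrac12,-\tfrac12,-\tfrac12,-\tfrac32$. The upper bound in \eqref{supp:eq:vN} pairs the largest $p_k$ with the largest $\kappa$, which gives
\[
p_-\big(\tfrac32+\tfrac12\big)+p_0\big(\tfrac12+\tfrac12-\tfrac12-\tfrac12\big)+p_+\big(-\tfrac12-\tfrac32\big)=2(p_--p_+).
\]
By \eqref{supp:eq:pops}, $p_--p_+=\tfrac12\tanh\tfrac{\beta\varepsilon}{2}$, so the bound equals $\tanh\tfrac{\beta\varepsilon}{2}$.

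\emph{Step 2: attainment.} I will exhibit a unitary $U$ that maps the two $p_-$ eigenvectors of $\rho_G$ to $\kE{\tfrac32}{\tfrac32}$ and to one $m=\tfrac12$ state. It maps the four $p_0$ eigenvectors to the remaining two $m=\tfrac12$ and two of the $m=-\tfrac12$ states. It maps the two $p_+$ eigenvectors to the last $m=-\tfrac12$ state and to $\kE{\tfrac32}{-\tfrac32}$. Then $U\rho_G U^\dagger\in\mathcal O_G$, and direct evaluation shows it saturates the bound. This is the converse direction of Lemma~\ref{lem:equality}.

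\emph{Step 3: remaining claims.} Any $\rho$ with $\langle\n\!\cdot\!\J\rangle=q>\tanh\tfrac{\beta\varepsilon}{2}$ would violate the bound, so the feasible set \eqref{supp:eq:problem} is empty. The same pairing applied to other spectra gives the quoted examples. A pure state $(1,0,\dots)$ gives $\tfrac32$. The spectrum $(\tfrac12,\tfrac12,0,\dots)$ gives $\tfrac12(\tfrac32+\tfrac12)=1$. The maximally mixed state $(\tfrac18,\dots)$ gives $\tfrac18\sum m=0$.

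The only real care needed is the ordering of the two sequences, namely that $p_->p_0>p_+$ so the descending order of weights is as stated. There is no substantive obstacle.
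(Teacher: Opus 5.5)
Your proposal is correct and follows the paper's route: the upper bound of Theorem~\ref{thm:trace} with $K=\n\!\cdot\!\J$, pairing the sorted Gibbs weights with the sorted $m$ values, with attainment by the state you build in Step~2, which is exactly the fully exchanged state $\rho_{\rm cap}$ of Sec.~\ref{sec:bound}. Lemma~\ref{lem:equality} is stated for the lower bound, so strictly you apply it to $-K$, but your direct evaluation already settles attainment on its own.
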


\section{The Lagrange-multiplier bound and the level crossing}
\label{sec:bound}

\paragraph{Weak duality.}
The constrained functional \eqref{eq:F} of the Letter, restricted to the orbit
where the spectral constraint is already satisfied, is
$F=\langle H\rangle+\lambda^a\big(\langle J_a\rangle-q_a\big)$. By
Sec.~\ref{sec:constraints} we may take $\boldsymbol\lambda=\lambda\,\n$.
On the constraint surface $F=\langle H\rangle$, and therefore, for every
feasible $\rho$ and every real $\lambda$,
\begin{equation}
\langle H\rangle=\Tr\!\big[\rho\,(H+\lambda\,\n\!\cdot\!\J)\big]-\lambda q
\ \ge\ P(\lambda)-\lambda q,\qquad
P(\lambda)\equiv P[H+\lambda\,\n\!\cdot\!\J],
\label{supp:eq:dual}
\end{equation}
by Corollary~\ref{cor:passive}. Hence
\begin{equation}
E_{\min}(q)\ \ge\ \sup_{\lambda\in\mathbb R}\big[P(\lambda)-\lambda q\big].
\label{supp:eq:weak}
\end{equation}
The operator $H+\lambda\,\n\!\cdot\!\J$ is the co-moving Hamiltonian
$H+\boldsymbol\lambda\!\cdot\!\J$ of Eq.~\eqref{eq:minimum} of the Letter, so the bound
is nothing but the constrained functional evaluated with the trace
inequality.

\paragraph{Properties of $P(\lambda)$.}
$P(\lambda)$ is the minimum over the fixed set $\mathcal O_G$ of the
affine functions $\lambda\mapsto\Tr[\rho(H+\lambda\,\n\!\cdot\!\J)]$. It is
therefore concave and, since the passive state changes only at finitely
many level crossings, piecewise linear. Where $P$ is differentiable, every passive minimizer has the same charge,
and by the envelope theorem
\begin{equation}
P'(\lambda)=\langle\n\!\cdot\!\J\rangle_{\rm passive}(\lambda)\equiv q_{\rm pass}(\lambda),
\end{equation}
the common charge of the passive states (differentiability fixes this
charge, not the state itself: degenerate eigenspaces of $K$ can carry
distinct passive states). At a kink the one-sided derivatives are the
charges of the passive states on either side. The right-hand side of
\eqref{supp:eq:dual} therefore has slope $q_{\rm pass}(\lambda)-q$, and its
maximum is reached where the charge of the passive state crosses the
prescribed $q$. Since $P(\lambda)$ is even under $\lambda\to-\lambda$
(replace $\n$ by $-\n$), it suffices to discuss $\lambda\le0$ for $q>0$.

\paragraph{The levels of the co-moving Hamiltonian.}
In the joint eigenbasis the levels of $H+\lambda\,\n\!\cdot\!\J$ are
$E+\lambda m$. For the Gibbs spectrum the populations are equal within each
multiplet, so as long as the three multiplets do not interleave the
passive state is $\rho_G$ itself: the splittings $\lambda m$ cancel in
$\sum_{m}p\,(E+\lambda m)$ because $\sum_mm=0$ within a multiplet. The
multiplets first touch when the highest level of the quadruplet reaches
the lowest level of the excited doublet and, simultaneously, when the
highest level of the ground doublet reaches the lowest level of the
quadruplet. For $\lambda=-\mu<0$:
\begin{equation}
\begin{array}{lll}
\text{pair }A:\quad & \kE{-}{-\tfrac12}:\ -\varepsilon+\tfrac{\mu}{2},
 & \kE{\tfrac32}{\tfrac32}:\ -\tfrac{3\mu}{2},\\[3pt]
\text{pair }B:\quad & \kE{\tfrac32}{-\tfrac32}:\ +\tfrac{3\mu}{2},
 & \kE{+}{+\tfrac12}:\ \varepsilon-\tfrac{\mu}{2},
\end{array}
\qquad\text{both cross at }\mu=\tfrac{\varepsilon}{2},
\label{supp:eq:pairs}
\end{equation}
at the common values $-\frac34\varepsilon$ and $+\frac34\varepsilon$. Both
pairs have $\Delta m=2$ and $\Delta E=\varepsilon$. All further crossings
between levels of \emph{unequal} population would occur at $\mu=\varepsilon$
or later, but by then the populations of the crossing levels have become
equal through the exchange of $A$ and $B$ (see below), so they change
nothing. Hence
\begin{equation}
P(\lambda)=\begin{cases}
E_G, & |\lambda|\le\varepsilon/2,\\[2pt]
E_G+\big(\tfrac{\varepsilon}{2}-|\lambda|\big)\,q_{\max}, & |\lambda|\ge\varepsilon/2,
\end{cases}
\qquad q_{\max}=\tanh\tfrac{\beta\varepsilon}{2},
\label{supp:eq:P}
\end{equation}
where for $|\lambda|>\varepsilon/2$ the passive state is the fully
exchanged state $\rho_{\rm cap}$ obtained from $\rho_G$ by swapping the
populations inside $A$ and inside $B$; it has charge $q_{\max}$ and energy
$E_G+\varepsilon q_{\max}/2$.

\paragraph{Selection of the multiplier.}
For $-\varepsilon/2\le\lambda\le0$ the right-hand side of \eqref{supp:eq:dual} is
$E_G-\lambda q$, which grows as $\lambda$ decreases; for
$\lambda<-\varepsilon/2$ it is
$E_G+\frac{\varepsilon}{2}q_{\max}+\lambda\,(q_{\max}-q)$, which decreases
as $\lambda$ decreases further because $q\le q_{\max}$. The supremum in
\eqref{supp:eq:weak} is therefore attained at the crossing,
\begin{equation}
\lambda_{\rm cr}=-\tfrac{\varepsilon}{2},\qquad
E_{\min}(q)\ \ge\ E_G+\tfrac{\varepsilon}{2}\,q\qquad(0\le q\le q_{\max}).
\label{supp:eq:bound}
\end{equation}
Physically, $\varepsilon/2$ is the energy cost per unit of charge of the
cheapest rearrangement of weight on the orbit. Exchanging the populations
of two levels with $\Delta m$ and $\Delta E$ costs $\Delta E/\Delta m$ per
unit of charge: the pairs $A$ and $B$ cost $\varepsilon/2$; every other
pair of levels with unequal populations has $\Delta m=1$ at
$\Delta E=\varepsilon$ or $\Delta m\le1$ at $\Delta E=2\varepsilon$, i.e.\
at least $\varepsilon$ per unit of charge; and exchanges inside a
multiplet carry no charge because the populations there are equal.

For $q=0$ the right-hand side of \eqref{supp:eq:dual} is maximal on the whole
plateau $|\lambda|\le\varepsilon/2$: the multiplier is undetermined. For
$q=q_{\max}$ it is maximal for all $\lambda\le-\varepsilon/2$. Both cases
correspond to static minimizers (Sec.~\ref{sec:limits}).

\section{The minimizers and the result}
\label{sec:minimizers}

\paragraph{Passive states of the co-moving Hamiltonian.}
Equality in \eqref{supp:eq:bound} requires, by Lemma~\ref{lem:equality}, a
passive state of $K_{\rm cr}\equiv H-\frac{\varepsilon}{2}\,\n\!\cdot\!\J$.
Its levels and the populations they receive are
\begin{equation}
\begin{array}{c|cccccccc}
\text{level of }K_{\rm cr} & -\tfrac54\varepsilon & -\tfrac34\varepsilon & -\tfrac34\varepsilon
 & -\tfrac14\varepsilon & +\tfrac14\varepsilon & +\tfrac34\varepsilon & +\tfrac34\varepsilon & +\tfrac54\varepsilon\botstrut\\ \hline
\topstrut\text{state} & \kE{-}{+\tfrac12} & \multicolumn{2}{c}{\underbrace{\textstyle\kE{-}{-\tfrac12},\ \kE{\tfrac32}{\tfrac32}}_{A}}
 & \kE{\tfrac32}{\tfrac12} & \kE{\tfrac32}{-\tfrac12}
 & \multicolumn{2}{c}{\underbrace{\textstyle\kE{\tfrac32}{-\tfrac32},\ \kE{+}{+\tfrac12}}_{B}} & \kE{+}{-\tfrac12}\\[2pt]
\text{population} & p_- & p_- & p_0 & p_0 & p_0 & p_0 & p_+ & p_+
\end{array}
\label{supp:eq:passiveK}
\end{equation}
The passive value is
$P(-\varepsilon/2)=p_-(-\frac54-\frac34)\varepsilon+p_0(-\frac34-\frac14+\frac14+\frac34)\varepsilon
+p_+(\frac34+\frac54)\varepsilon=-2\varepsilon(p_--p_+)=E_G$, as used in
\eqref{supp:eq:P}. The two degenerate levels leave the basis inside $A$ and
inside $B$ free. The general passive state is therefore
\begin{equation}
\begin{aligned}
\rho_*&=p_-\,\big(\PE{-}{+\frac12}+\ket{a}\bra{a}\big)\\
&\quad+p_0\,\big(\ket{a'}\bra{a'}+\PE{\frac32}{\frac12}+\PE{\frac32}{-\frac12}+\ket{b}\bra{b}\big)\\
&\quad+p_+\,\big(\ket{b'}\bra{b'}+\PE{+}{-\frac12}\big),
\qquad \PE{\mu}{m}\equiv\kE{\mu}{m}\bra{\mu,m}_{\n},
\end{aligned}
\label{supp:eq:rhostar}
\end{equation}
with
\begin{equation}
\begin{aligned}
\ket{a}&=\cos\theta\,\kE{-}{-\tfrac12}+e^{i\phi}\sin\theta\,\kE{\tfrac32}{\tfrac32},
&\ket{a'}&=-e^{-i\phi}\sin\theta\,\kE{-}{-\tfrac12}+\cos\theta\,\kE{\tfrac32}{\tfrac32},\\
\ket{b}&=\cos\vartheta\,\kE{\tfrac32}{-\tfrac32}+e^{i\varphi}\sin\vartheta\,\kE{+}{+\tfrac12},
&\ket{b'}&=-e^{-i\varphi}\sin\vartheta\,\kE{\tfrac32}{-\tfrac32}+\cos\vartheta\,\kE{+}{+\tfrac12},
\end{aligned}
\label{supp:eq:family}
\end{equation}
where $\theta,\vartheta\in[0,\pi/2]$ and $\phi,\varphi$ are arbitrary
phases; $\ket{a}$ and $\ket{b}$ are the states \eqref{eq:family2} of the Letter. Since
$\ket{a}\bra{a}+\ket{a'}\bra{a'}$ is the projector on pair $A$, and
likewise for $B$, \eqref{supp:eq:rhostar} is the form \eqref{eq:family1} of the Letter,
$\rho_*=\rho_G+(p_--p_0)\big(\ket a\bra a-\PE{-}{-\frac12}\big)
+(p_0-p_+)\big(\ket b\bra b-\PE{\frac32}{-\frac32}\big)$.

\paragraph{Charge and energy of the family.}
Inside pair $A$ the populations $p_-$ and $p_0$ sit on $\ket{a}$ and
$\ket{a'}$. The contribution of the pair to $\langle\n\!\cdot\!\J\rangle$ is
$p_-(-\frac12\cos^2\theta+\frac32\sin^2\theta)+p_0(-\frac12\sin^2\theta+\frac32\cos^2\theta)$,
which exceeds its value at $\theta=0$ by $2(p_--p_0)\sin^2\theta$, and to
$\langle H\rangle$ by $\varepsilon(p_--p_0)\sin^2\theta$; pair $B$
contributes analogously with $p_0-p_+$ and $\vartheta$. All other terms of
\eqref{supp:eq:rhostar} are those of $\rho_G$. The coherences in \eqref{supp:eq:family}
connect states with $\Delta m=2$, while $J_\pm$ change $m$ by one, so the
transverse charge vanishes. Hence
\begin{equation}
\langle\J\rangle_*=q\,\n,\qquad
q=2(p_--p_0)\sin^2\!\theta+2(p_0-p_+)\sin^2\!\vartheta,\qquad
\langle H\rangle_*=E_G+\tfrac{\varepsilon}{2}\,q ,
\label{supp:eq:qE}
\end{equation}
the second of which is the constraint \eqref{eq:angle_constraint} of the Letter. As
$(\theta,\vartheta)$ range over $[0,\pi/2]^2$, $q$ covers
$[0,\,2(p_--p_+)]=[0,q_{\max}]$. Every charge in this window is therefore
attained with the energy $E_G+\varepsilon q/2$, and the bound
\eqref{supp:eq:bound} is saturated.

\begin{theorem}
\label{thm:main}
On the Gibbs orbit $\mathcal O_G$ at inverse temperature $\beta$,
\begin{equation}
E_{\min}(q)=\varepsilon\Big[\frac q2-\tanh\frac{\beta\varepsilon}{2}\Big]
\qquad\text{for }0\le q\le\tanh\frac{\beta\varepsilon}{2},
\label{supp:eq:Emin}
\end{equation}
and there is no state with $q>\tanh(\beta\varepsilon/2)$. The minimizers at
charge $q$ are exactly the states \eqref{supp:eq:rhostar}--\eqref{supp:eq:family}
with $(\theta,\vartheta)$ on the curve \eqref{supp:eq:qE} and arbitrary
$\phi,\varphi$.
\end{theorem}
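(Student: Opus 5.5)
The theorem collects three facts, and I would prove them in this order: the upper end of the charge window, the value of $E_{\min}(q)$, and the characterization of the minimizers.

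\emph{Charge window.} For $q>\tanh(\beta\varepsilon/2)$ the feasible set is empty. This is Corollary~\ref{cor:cap}, applied to $K=\n\!\cdot\!\J$.

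\emph{Value of $E_{\min}(q)$.} This follows from a matching lower and upper bound.
\begin{itemize}
\item The weak-duality bound \eqref{supp:eq:weak}, evaluated at $\lambda=-\varepsilon/2$ with $P(-\varepsilon/2)=E_G$ from the table \eqref{supp:eq:passiveK}, gives $E_{\min}(q)\ge E_G+\varepsilon q/2$. Only this single value of $\lambda$ is needed; the full piecewise form \eqref{supp:eq:P} is not.
\item The explicit family \eqref{supp:eq:rhostar}--\eqref{supp:eq:family} lies on $\mathcal O_G$, since its eigenvalues are visibly the Gibbs weights on an orthonormal basis.
\item By \eqref{supp:eq:qE}, this family has $\langle\J\rangle=q\n$ and $\langle H\rangle=E_G+\varepsilon q/2$.
\item The map $(\theta,\vartheta)\mapsto 2(p_--p_0)\sin^2\theta+2(p_0-p_+)\sin^2\vartheta$ is continuous on $[0,\pi/2]^2$ with range $[0,2(p_--p_+)]$. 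So every admissible $q$ is attained, and the bound is saturated.
\end{itemize}
Using $E_G=-\varepsilon\tanh(\beta\varepsilon/2)$ from \eqref{supp:eq:pops} turns this into \eqref{supp:eq:Emin}.

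\emph{Minimizers, and the expected obstacle.} I need the converse: every minimizer belongs to the family. Suppose $\rho$ is feasible at charge $q$ with $\Tr(\rho H)=E_G+\varepsilon q/2$. Then
\[
\Tr(\rho K_{\rm cr})=\Tr(\rho H)-\tfrac{\varepsilon}{2}q=E_G=P[K_{\rm cr}],
\]
so $\rho$ attains the lower von Neumann bound for $K_{\rm cr}$. By Lemma~\ref{lem:equality}, $\rho$ then has an eigenbasis common with $K_{\rm cr}$ in which the populations are ordered oppositely to the levels.

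Matching the Gibbs populations $p_->p_0>p_+$ (multiplicities $2,4,2$) to the ordered levels in \eqref{supp:eq:passiveK} fixes the assignment outside pairs $A$ and $B$. The level $-\tfrac54\varepsilon$ receives $p_-$. The two nondegenerate levels $\pm\tfrac14\varepsilon$ receive $p_0$. The level $+\tfrac54\varepsilon$ receives $p_+$.

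The step needing care is the degenerate two-dimensional eigenspaces $A$ and $B$. Each carries two unequal populations: $p_-,p_0$ on $A$ and $p_0,p_+$ on $B$. Lemma~\ref{lem:equality} only says that $\rho$ restricted to $A$ has some orthonormal eigenbasis $\{\ket a,\ket{a'}\}$ of $A$ with eigenvalues $p_-,p_0$, and likewise on $B$. I must then check that every such basis can be written in the form \eqref{supp:eq:family}:
\begin{itemize}
\item A unit vector $\ket a\in A$, up to an irrelevant global phase, is $\cos\theta\,\kE{-}{-\tfrac12}+e^{i\phi}\sin\theta\,\kE{\tfrac32}{\tfrac32}$ with $\theta\in[0,\pi/2]$.
\item $\ket{a'}$ is then fixed up to a phase as its orthogonal complement in $A$, and the phase does not affect the projector.
\item The same argument applies to $B$.
\end{itemize}
Because $p_-\neq p_0$ and $p_0\neq p_+$ for $0<\beta<\infty$, the assignment within each pair is rigid, and no further freedom remains.

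Finally, the charge constraint restricts $(\theta,\vartheta)$ to the curve \eqref{supp:eq:qE}, while $\phi,\varphi$ stay free. This is exactly the stated characterization.
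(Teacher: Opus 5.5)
Your proposal is correct and follows the paper's proof. The paper also takes the charge window from Corollary~\ref{cor:cap}, the lower bound from weak duality at $\lambda=-\varepsilon/2$ matched by the explicit family, and the converse from the equality case (Lemma~\ref{lem:equality}) of von Neumann's inequality for $K_{\rm cr}$; your observations that one value of $\lambda$ suffices and how the degenerate blocks $A$ and $B$ are parametrized only make explicit what the paper's short proof leaves implicit.
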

\begin{proof}
The lower bound is \eqref{supp:eq:bound}, attainment is \eqref{supp:eq:qE}, and the
window is Corollary~\ref{cor:cap}. A minimizer must give equality in
\eqref{supp:eq:dual} at $\lambda=-\varepsilon/2$ and is therefore passive for
$K_{\rm cr}$ by Lemma~\ref{lem:equality}, i.e.\ of the form
\eqref{supp:eq:rhostar}.
\end{proof}

\begin{remark}[duality and the cusp]
\label{rem:cusp}
On the Gibbs orbit, Eqs.~\eqref{supp:eq:P} and \eqref{supp:eq:Emin} are
concave/convex conjugates of each other: since $q=|\langle\J\rangle|\ge0$,
\begin{equation}
P(\lambda)=\min_{0\le q\le q_{\max}}\big[E_{\min}(q)-|\lambda|\,q\big],
\qquad
E_{\min}(q)=\sup_{\lambda\in\mathbb R}\big[P(\lambda)-\lambda q\big]
\quad(0\le q\le q_{\max}),
\label{supp:eq:conj}
\end{equation}
which is the content of the relation
$\lambda_{\rm cr}=-\partial E_{\min}/\partial q$ of the Letter. This strong
duality is a property of the Gibbs orbit; it is not claimed for arbitrary
prescribed spectra (Sec.~\ref{sec:limits}). As a function of the charge
vector, $E_{\min}(\bq)=E_G+\frac{\varepsilon}{2}|\bq|$ is a cone with its
apex at $\bq=0$. Its subdifferential there is the ball
$|\boldsymbol\lambda|\le\varepsilon/2$, which is the plateau of $P$. Two
different notions must be kept apart at $q=0$: stationarity of $\rho_G$,
$[H,\rho_G]=[J_a,\rho_G]=0$, allows \emph{every} $\boldsymbol\lambda$ in
Eq.~\eqref{eq:minimum} of the Letter, whereas the optimal multipliers of the bound
\eqref{supp:eq:dual} form the ball. At the exact Gibbs state the charge-conservation
system that determines the multipliers [Eq.~\eqref{eq:charge} of the Letter;
Sec.~\ref{sec:numerics}] is satisfied identically for every
$\boldsymbol\lambda$, and its minimum-norm solution is $\boldsymbol\lambda=0$. The value observed along a trajectory
that approaches $\rho_G$ is a property of the path: for the random initial
states tabulated in Sec.~\ref{sec:numerics} the multiplier approaches the
boundary of the ball, $|\boldsymbol\lambda|\to\varepsilon/2$, in a
direction set by the initial state, whereas for the initial state
$\big[U_\chi\,{\rm diag}(p_-,p_+)U_\chi^\dagger\otimes I_2\big]\oplus p_0I_4$
(a rotation inside chirality space at fixed total spin), which has $q=0$
and $[J_a,\rho(t)]=0$ throughout, the flow relaxes to $\rho_G$ with
$\boldsymbol\lambda\equiv0$. No limiting value of the multiplier at $q=0$
is therefore an invariant of the endpoint.
\end{remark}

\paragraph{Uniqueness of the multiplier.}
Equation~\eqref{eq:minimum} of the Letter, $[H+\lamcr\!\cdot\!\J,\rho_*]=0$, holds for
$\lamcr=-\frac{\varepsilon}{2}\n$ by construction, since $\rho_*$ is
block-diagonal in the eigenspaces of $K_{\rm cr}$. The multiplier is unique
when the three commutators $[J_a,\rho_*]$ are linearly independent. A
vanishing combination $[\mathbf c\!\cdot\!\J,\rho_*]=0$ means invariance
of $\rho_*$ under rotations about $\mathbf c$. For $q\neq0$ and $\mathbf c$
not parallel to $\n$ such a rotation tilts the charge, so $\rho_*$ cannot be
invariant. For $\mathbf c=\n$ the rotation multiplies the coherence inside
$A$ and $B$ by $e^{-2i\alpha}$, so $\rho_*$ is invariant only if both
coherences vanish, $\sin\theta\cos\theta=\sin\vartheta\cos\vartheta=0$.
For finite positive $\beta$, where $p_->p_0>p_+$, the commutators are
therefore independent, and $\lamcr=-\frac{\varepsilon}{2}\n$ is unique, for
every member of the family except the four diagonal ones,
$(\theta,\vartheta)\in\{0,\pi/2\}^2$. Equivalently, a member of the family
is moving exactly when
\begin{equation}
(p_--p_0)\sin2\theta\neq0\qquad\text{or}\qquad(p_0-p_+)\sin2\vartheta\neq0,
\label{supp:eq:moving}
\end{equation}
a criterion that remains valid in the limits of Sec.~\ref{sec:limits}
(at $T=0$ the angle $\vartheta$ does not change the state). The diagonal
members occur at the charges
\begin{equation}
q\in\{0,\ q_A,\ q_B,\ q_A+q_B\},\qquad q_A=2(p_--p_0),\qquad q_B=2(p_0-p_+),
\qquad q_A+q_B=q_{\max},
\label{supp:eq:static}
\end{equation}
and are static. This is the qualification ``generic $(\theta,\vartheta)$''
of the Letter. At the static members $\lamcr$ remains the unique interior
optimizer of the dual bound \eqref{supp:eq:weak}, but stationarity alone,
Eq.~\eqref{eq:minimum} of the Letter, admits additional multipliers there. At the
interior values $q_A$ and $q_B$ the static members are degenerate in energy
with the moving members of the same charge; the sampled initial states at
exactly these charges converge to moving members (Table~\ref{tab:series}).

\section{Motion of the minimizing family and its signatures}
\label{sec:motion}

\paragraph{Rotation.}
With $\lamcr=-\frac{\varepsilon}{2}\n$, Eq.~\eqref{eq:crit} of the Letter gives
\begin{equation}
\rho_*(t)=e^{-\frac{i}{\hbar}\frac{\varepsilon}{2}t\,\n\cdot\J}\,\rho_*\,
e^{+\frac{i}{\hbar}\frac{\varepsilon}{2}t\,\n\cdot\J},
\end{equation}
a rotation about $\n$ with angular velocity $\varepsilon/2\hbar=\sqrt3\,\omega/2$.
Equivalently, under $e^{-iHt/\hbar}$ every matrix element
$\bra{k}\rho\ket{l}$ acquires the phase $e^{-i(E_k-E_l)t/\hbar}$; the only
off-diagonal elements of $\rho_*$ in the joint basis are the coherences
inside $A$ and $B$, which connect levels with $E_k-E_l=\mp\varepsilon$. In
the states \eqref{supp:eq:family} this is the advance of the azimuths,
$\phi(t)=\phi-\varepsilon t/\hbar$ and $\varphi(t)=\varphi-\varepsilon t/\hbar$,
at constant $\theta$, $\vartheta$ and constant difference $\phi-\varphi$.
A rotation by $\pi$ about $\n$ multiplies a $\Delta m=2$ coherence by
$e^{-2\pi i}=1$ and leaves the diagonal part invariant, so for a
nonstationary member [Eq.~\eqref{supp:eq:moving}] the fundamental period of the
state is
\begin{equation}
T=\frac{2\pi\hbar}{\varepsilon}=\frac{2\pi}{\sqrt3\,\omega},
\end{equation}
half the period of a full $2\pi$ rotation; the four diagonal members of
\eqref{supp:eq:static} are static.

\paragraph{Single-spin expectations are static.}
$\boldsymbol\sigma_i$ changes $m$ by $0$ or $\pm1$ and therefore does not
connect the two components of $\ket{a}$ or of $\ket{b}$, which differ by
$\Delta m=2$. Only the diagonal part of $\rho_*$ contributes, and by
Lemma~\ref{lem:single}
\begin{equation}
\langle\boldsymbol\sigma_i\rangle_*(t)=\frac23\,\n\sum_k m_k\,\bra{k}\rho_*\ket{k}=\frac{2q}{3}\,\n
\qquad(i=1,2,3),
\end{equation}
where $\ket k$ runs over the joint basis \eqref{supp:eq:levels}, whose diagonal
populations depend on the mixing angles (they are not the eigenvalues
$p_k$ of $\rho_*$), and $\sum_km_k\bra k\rho_*\ket k=\langle\n\!\cdot\!\J\rangle=q$.
The result is the same on all sites and constant in time. The magnetizations of the time
crystal are thus those of a static, uniformly polarized cluster; at $q=0$
they vanish identically.

\paragraph{Double-raising two-spin correlations oscillate.}
With raising operators $\sigma_i^\pm=\frac12(\sigma_i^{x}\pm i\sigma_i^{y})$
defined relative to $\n$, the operator $\sigma_i^+\sigma_j^+$ ($i\neq j$)
raises $m$ by two. On the full eight-dimensional space it connects further
pairs of states, but for the family \eqref{supp:eq:rhostar} only the coherences
inside $A$ and $B$ contribute to its expectation value. From \eqref{supp:eq:rhostar},
$\bra{-,-\tfrac12}\rho_*\kE{\tfrac32}{\tfrac32}=(p_--p_0)\sin\theta\cos\theta\,e^{-i\phi}$
and
$\bra{\tfrac32,-\tfrac32}\rho_*\kE{+}{+\tfrac12}=(p_0-p_+)\sin\vartheta\cos\vartheta\,e^{-i\varphi}$,
so that
\begin{equation}
\langle\sigma_i^+\sigma_j^+\rangle_*(t)=e^{\,i\varepsilon t/\hbar}\Big[
(p_--p_0)\sin\theta\cos\theta\,e^{-i\phi}M_A^{ij}
+(p_0-p_+)\sin\vartheta\cos\vartheta\,e^{-i\varphi}M_B^{ij}\Big],
\label{supp:eq:twospin}
\end{equation}
with $M_A^{ij}=\bra{\tfrac32,\tfrac32}\sigma_i^+\sigma_j^+\kE{-}{-\tfrac12}$
and $M_B^{ij}=\bra{+,\tfrac12}\sigma_i^+\sigma_j^+\kE{\tfrac32}{-\tfrac32}$.
From the explicit states of Sec.~\ref{sec:model},
$\sigma_1^+\sigma_2^+\kE{-}{-\frac12}=\frac{1}{\sqrt3}\ket{\uparrow\uparrow\uparrow}$,
and in general $|M_A^{ij}|=|M_B^{ij}|=1/\sqrt3$ for all three pairs of
sites; with the phase conventions of Sec.~\ref{sec:model},
$M_B^{ij}=-M_A^{ij}$. Both terms oscillate at the same angular frequency
$\varepsilon/\hbar=\sqrt3\,\omega$, so whenever its amplitude is nonzero
the signal is a single complex exponential with period $T$, whose real and
imaginary parts are sinusoids. Its amplitude depends on how the charge is
shared between the pairs, $q=q_A\sin^2\theta+q_B\sin^2\vartheta$, and on
the relative phase $\phi-\varphi$:
\begin{equation}
\big|\langle\sigma_i^+\sigma_j^+\rangle_*\big|
=\frac{1}{2\sqrt3}\,\Big|\sqrt{q_a(q_A-q_a)}\;+\;e^{i\delta}\sqrt{q_b(q_B-q_b)}\Big|,
\qquad q_a=q_A\sin^2\!\theta,\quad q_b=q_B\sin^2\!\vartheta,
\label{supp:eq:amp}
\end{equation}
where $q_a+q_b=q$, $q_A=2(p_--p_0)$, $q_B=2(p_0-p_+)$, and
$\delta=\phi-\varphi+\pi$ is the relative phase, including that of
$M_B^{ij}/M_A^{ij}=-1$. The oscillation frequency is
$\Delta E/\hbar=\varepsilon/\hbar$, whereas the symmetry-rotation rate is
$(\Delta E/\Delta m)/\hbar=\varepsilon/2\hbar$. Temperature, charge and the
internal allocation determine the amplitude. Its envelope vanishes as
$O(\sqrt q)$ near $q=0$ and as $O(\sqrt{q_{\max}-q})$ near the cap, and at
fixed $q$ the amplitude ranges over a band, between the minimum of
$|\sqrt{q_a(q_A-q_a)}-\sqrt{q_b(q_B-q_b)}|/2\sqrt3$ and the maximum of
$[\sqrt{q_a(q_A-q_a)}+\sqrt{q_b(q_B-q_b)}]/2\sqrt3$ over the admissible
splits $q_a+q_b=q$. Destructive interference can suppress an individual
signal further: the two terms of \eqref{supp:eq:twospin} cancel for equal
coherence magnitudes and $\phi=\varphi$. At $\beta\varepsilon=1$ and
$q=0.05$, for instance, $\theta=0.1751353550$ and $\vartheta=0.6010307390$
give $q_a=0.01025668677$, $q_b=0.03974331323$ and two coherences of equal
magnitude $0.02898217210$, and every $\langle\sigma_i^+\sigma_j^+\rangle$
vanishes. All one- and two-spin expectations of this member are then time
independent although $[H,\rho_*]\neq0$; its motion shows in the three-spin
probe $\langle\sigma_1^+\sigma_2^+\sigma_3^{z}\rangle$, which oscillates
with amplitude $0.03346572973$. The motion is therefore generically, not
unconditionally, visible in two-spin correlations, consistent with the
zero lower edge of several amplitude bands in Tables~\ref{tab:attractors}
and \ref{tab:series}. A density-matrix measure of the motion that does not
depend on the choice of probe is
\begin{equation}
\|\dot\rho_*\|_{HS}^2=\frac{\varepsilon^2}{2\hbar^2}\Big[q_a(q_A-q_a)+q_b(q_B-q_b)\Big],
\label{supp:eq:rdot}
\end{equation}
which vanishes exactly when both block coherences vanish, i.e.\ exactly
for the static members \eqref{supp:eq:static}. The split $(q_a,q_b)$ and the
phases are not fixed by the minimization and are selected by the initial
state of the flow; this is why runs with the same $q$ in
Sec.~\ref{sec:numerics} end with different amplitudes but the same
frequency. The last two columns of Table~\ref{tab:attractors} compare the
amplitude reached in each run with this band, and Table~\ref{tab:series}
compares $\|\dot\rho\|$ with \eqref{supp:eq:rdot}.

\section{Limits and special cases}
\label{sec:limits}

\paragraph{$q=0$: the Gibbs state.}
At $q=0$ the family reduces to $\theta=\vartheta=0$, i.e.\ to $\rho_G$,
the unique passive state of $H$ on its own orbit. It is $SU(2)$ invariant,
$[J_a,\rho_G]=0$, so Eq.~\eqref{eq:minimum} of the Letter holds for every
$\boldsymbol\lambda$ and the multipliers are undetermined, in agreement
with the plateau of $P$. The sampled generic initial states at $q=0$
(Sec.~\ref{sec:numerics}) converge to $\rho_G$. This is not a statement
about every initial state: the energy decreases monotonically along the
flow, which is a Lyapunov statement, but the orbit contains other fixed
points. The state $\rho_{\rm inv}=p_+P_-+p_0P_Q+p_-P_+$, with $P_\mu$ the
projectors on the three multiplets, has the Gibbs spectrum,
$\langle\J\rangle=0$ and the energy $-E_G>E_G$; it commutes with $H$ and
with all $J_a$, so the covariant flow leaves it fixed. That the
zero-charge minimizer is $\rho_G$ and is stationary is consistent with the
no-go theorems for ground and Gibbs states, which the construction thus
recovers as its zero-charge member without deriving them. If the flow
reaches the constrained minimum, its energy decrease is
$E(\rho_0)-E_G-\varepsilon q/2$. At $q=0$ this equals, as a numerical
energy difference, the ergotropy \cite{Allahverdyan2004} of the initial
state (whether the decrease is extractable as work depends on a physical
work-storage or control implementation, which the phenomenological
dissipative equation does not supply); the energy $\varepsilon q/2$ that
cannot be removed without changing the charge is the energy of the
rotating state.

\paragraph{$q=q_{\max}$: the fully exchanged state.}
At $q=q_{\max}$ the constraint \eqref{supp:eq:qE} forces
$\theta=\vartheta=\pi/2$, so the energy-minimizing state is uniquely
$\rho_{\rm cap}$, in which both pairs are completely exchanged. It is not
the only state of this charge on the orbit: rotating $\kE{-}{+\frac12}$
against $\kE{\frac32}{\frac12}$ inside their common $m=\frac12$ subspace
by an angle $\gamma$ preserves the spectrum and the charge but raises the
energy by $\varepsilon(p_--p_0)\sin^2\gamma$, because the two
charge-maximizing eigenspaces of $\n\!\cdot\!\J$ carry unequal
populations. The state $\rho_{\rm cap}$ is diagonal in the joint basis,
hence static, with $E=E_G+\frac{\varepsilon}{2}q_{\max}=-\frac{\varepsilon}{2}\tanh(\beta\varepsilon/2)$.
Since $E_G=-\varepsilon q_{\max}$, the fully exchanged states of all
temperatures lie on the line $E=-\varepsilon q/2$ of Fig.~\ref{fig:gibbs}(a) of the
Letter.

\paragraph{Zero temperature.}
For $\beta\to\infty$, $p_0,p_+\to0$ and $p_-\to\frac12$. The family
reduces to the rank-two states
\begin{equation}
\rho_*=\tfrac12\,\PE{-}{+\frac12}+\tfrac12\,\ket{a}\bra{a},\qquad
q=\sin^2\theta\in[0,1],\qquad E_{\min}(q)=-\varepsilon+\tfrac{\varepsilon}{2}q,
\end{equation}
on the orbit of the ground-doublet spectrum $(\frac12,\frac12,0,\dots,0)$.
At $q=0$ this is the static state $P_-/2$; for $0<q<1$ it is a rank-two
time crystal; at $q=1$ it is again static. This is the $T=0$ line of
Fig.~\ref{fig:gibbs}(a) of the Letter.

\paragraph{Infinite temperature.}
For $\beta\to0$ all populations tend to $\frac18$, $q_{\max}\to0$, and the
window closes: the maximally mixed state is the only state on its orbit.

\paragraph{Other spectra.}
The static minimum at $q=0$ is a property of the Gibbs spectrum, or more
generally of any spectrum whose passive state for $H$ is a function of
$H$, hence $SU(2)$ invariant; this is the case when the eigenvalues
assigned to each eigenspace of $H$ are equal. For other spectra a
zero-charge minimum may be moving or stationary, depending on the
spectrum. A stationary example that is not a function of $H$ is the state
diagonal in the joint basis with weight $0.22$ on each ground-doublet
state, $0.16$ on the quadruplet states with $m=\pm\frac32$, $0.10$ on those
with $m=\pm\frac12$ and $0.02$ on each excited-doublet state: it has zero
charge, is passive for $H$ and commutes with $H$ and $\n\!\cdot\!\J$, hence
is static, but its quadruplet populations differ. A moving example is a
pure state, whose passive state $\kE{-}{+\frac12}$ has $q=\frac12$, so
that no eigenstate meets the constraint $q=0$. Let $x$, $y$, $z$ be the
weights of a pure state in the negative-chirality doublet, the
positive-chirality doublet and the quadruplet. Since $\J$ has no matrix
elements between these sectors and the normalized doublet components have
spin length $\frac12$, $\langle\J\rangle=0$ requires
$\frac x2\le\frac y2+\frac{3z}{2}$ with $x+y+z=1$. With $w=x-y$ this gives
$w\le3z$ and $w\le1-z$, hence $w\le\frac34$ and
$E=-\varepsilon w\ge-\frac34\varepsilon$, with equality for the
superposition of $\kE{-}{+\frac12}$ with weight $\frac34$ and
$\kE{\frac32}{-\frac32}$ with weight $\frac14$. This is the constrained
minimum at $q=0$, $E_{\min}=-\frac34\varepsilon$. The level crossing of
the two states involved, a $\Delta m=2$, $\Delta E=\varepsilon$ pair,
identifies this relative equilibrium, with $|\lamcr|=\varepsilon/2$ and
the frequency $\sqrt3\,\omega$; but the state is not passive for
$H+\lamcr\!\cdot\!\J$, since at the crossing another level lies lower, and
$\sup_\lambda\lambda_{\min}(H+\lambda\,\n\!\cdot\!\J)=-\varepsilon<-\frac34\varepsilon$.
For a pure state the bound \eqref{supp:eq:weak} therefore has a duality gap,
and minimality rests on the direct inequality above, not on
Eq.~\eqref{supp:eq:conj}. The frequency of the Gibbs-orbit family is not
claimed for other spectra beyond this example.

\section{Isotropic exchange and the doublet-mixing branch}
\label{sec:exchange}

\paragraph{Exchange gap.}
The dominant interaction in $\mathrm{Cu}_3$ triangles is the
antiferromagnetic Heisenberg exchange
$H_{\rm ex}=J\sum_{i<j}\mathbf s_i\!\cdot\!\mathbf s_j
=\frac{J}{2}(\J^2-\frac94)$, $J>0$, with $\mathbf s_i=\boldsymbol\sigma_i/2$.
It commutes with $H$ and with $\J$, equals $-\frac34J$ on both doublets and
$+\frac34J$ on the quadruplet, and therefore raises the quadruplet by the
exchange gap $\Delta=\frac32J$ relative to the doublets, while the
chirality splitting $2\varepsilon$ between the doublets is untouched. We
measure energies from the centre of the two doublets, i.e.\ we use
\begin{equation}
H_\Delta=H+H_{\rm ex}+\tfrac34J\,I=H+\Delta\,P_Q,\qquad \Delta=\tfrac32J,
\label{supp:eq:HDelta}
\end{equation}
with $P_Q$ the projector on the quadruplet, so that the levels are
$-\varepsilon$, $+\varepsilon$ and $\Delta$; this shifted zero is used in
all Gibbs weights and in $E_G$ below. The raw energy differs by the
constant $-\frac34J$, which changes neither the normalized Gibbs state nor
the dynamics. (The value of $\Delta$ in terms of the exchange constant
depends on the convention for $H_{\rm ex}$; the gap $\Delta$ itself does
not.) Since $H_{\rm ex}$ is a function of $\J^2$ alone, the conservation
structure, the joint eigenbasis, Lemma~\ref{lem:single} and the whole
construction of Secs.~\ref{sec:trace}--\ref{sec:minimizers} carry over
verbatim; only the level positions $E$ in \eqref{supp:eq:levels}, and through
them the thermal populations and the crossing sequence, change. The Gibbs
weights become
\begin{equation}
p_\pm=\frac{e^{\mp\beta\varepsilon}}{Z},\qquad p_0=\frac{e^{-\beta\Delta}}{Z},\qquad
Z=4\cosh\beta\varepsilon+4e^{-\beta\Delta},\qquad
E_G=\frac{\Delta e^{-\beta\Delta}-\varepsilon\sinh\beta\varepsilon}{\cosh\beta\varepsilon+e^{-\beta\Delta}},
\label{supp:eq:popsex}
\end{equation}
with $p_->p_+>p_0$ as soon as $\Delta>\varepsilon$. Relevant candidate
level crossings and their signed ratios $\Delta E/\Delta m$ are
\begin{equation}
\begin{array}{lccc}
 & \Delta m & \Delta E & \Delta E/\Delta m\botstrut\\ \hline
\topstrut A:\ \kE{-}{-\tfrac12}\leftrightarrow\kE{\tfrac32}{\tfrac32} & 2 & \varepsilon+\Delta & (\varepsilon+\Delta)/2\\[2pt]
B:\ \kE{\tfrac32}{-\tfrac32}\leftrightarrow\kE{+}{+\tfrac12} & 2 & \varepsilon-\Delta & (\varepsilon-\Delta)/2\\[2pt]
B':\ \kE{+}{-\tfrac12}\leftrightarrow\kE{\tfrac32}{\tfrac32} & 2 & \Delta-\varepsilon & (\Delta-\varepsilon)/2\\[2pt]
C:\ \kE{-}{-\tfrac12}\leftrightarrow\kE{+}{+\tfrac12} & 1 & 2\varepsilon & 2\varepsilon\\[2pt]
\text{other}\ \Delta m=1\ \text{pairs} & 1 & \varepsilon\pm\Delta,\ \Delta\pm\varepsilon & \varepsilon\pm\Delta,\ \Delta\pm\varepsilon
\end{array}
\label{supp:eq:costs}
\end{equation}
A positive-charge transfer moves the larger population of a pair toward
the larger $m$, at the cost $\Delta E/\Delta m$ per unit of charge; which
of the candidate pairs are actual transfers therefore depends on the
ordering of the populations, i.e.\ on $\Delta$. In particular pair $B$,
as written, is a positive-charge transfer only for $\Delta<\varepsilon$:
for $\Delta>\varepsilon$ its population ordering is reversed (the ratio
becomes negative) and $B'$ takes its place.

\paragraph{Strong exchange, $\Delta>5\varepsilon$: the doublet-mixing branch.}
This exchange-dominated parameter range is the one treated in the Letter.
It is motivated by the separation of energy scales in the $\mathrm{Cu}_3$
example of Ref.~\cite{Iida2011}: inelastic neutron scattering on that
polyoxometalate triangle places the four lowest levels at approximately
$0$, $0.106$, $0.592$ and $0.622$~meV, so that identifying the $0.106$~meV
splitting of the two lowest doublets with $2\varepsilon$, for the purpose
of the comparison only, puts the mean position of the upper levels at
$\Delta\approx0.55$~meV above the centre of the doublets, i.e.\
$\Delta/\varepsilon\approx10$. Ref.~\cite{Iida2011} attributes the doublet
splitting to antisymmetric (Dzyaloshinskii--Moriya) exchange rather than
to the isotropic scalar-chirality term, and the ideal $SU(2)$ joint
eigenbasis and the exact conservation of all three charges of the present
model do not carry over to that anisotropic Hamiltonian; the comparison
fixes only the ratio of the energy scales. The quadruplet carries the
smallest population, so the transfers that raise $m$ at the lowest cost
are $C$, between the doublets, at $2\varepsilon$ per unit of charge, and
$B'$, from the excited doublet into the quadruplet, at
$(\Delta-\varepsilon)/2$. The former is cheaper exactly when
$\Delta>5\varepsilon$ (at $\Delta=5\varepsilon$ the two crossings
coincide and the two-level argument below does not apply as such). The
crossing analysis of
Sec.~\ref{sec:bound} then selects $\lambda_{\rm cr}=-2\varepsilon$, where
the levels $\kE{-}{-\tfrac12}$ and $\kE{+}{+\tfrac12}$ of
$H+\lambda\,\n\!\cdot\!\J$ coincide, and the passive state of the co-moving
Hamiltonian mixes them:
\begin{equation}
\begin{gathered}
\ket{c}=\cos\chi\,\kE{-}{-\tfrac12}+e^{i\eta}\sin\chi\,\kE{+}{+\tfrac12},\qquad
0\le\chi\le\tfrac{\pi}{2},\\[2pt]
\rho_*=\rho_G+(p_--p_+)\big(\ket c\bra c-\PE{-}{-\frac12}\big),
\end{gathered}
\label{supp:eq:cstate}
\end{equation}
the states written in the Letter. The pair has $\Delta m=1$. Although its
coherence connects states with $\Delta m=1$, it carries no transverse
total spin, because $\bra{+,\tfrac12}J_+\kE{-}{-\tfrac12}=0$: the
total-spin generators commute with $H$ and do not connect its distinct
chirality eigenspaces. Hence
\begin{equation}
\begin{gathered}
\langle\J\rangle_*=q\,\n,\qquad q=(p_--p_+)\sin^2\!\chi,\qquad 0\le q\le p_--p_+ ,\\[2pt]
E_{\min}(q)=E_G+2\varepsilon\,q,\qquad \lamcr=-2\varepsilon\,\n,
\end{gathered}
\label{supp:eq:strongex}
\end{equation}
exactly, by the same argument as in Secs.~\ref{sec:bound}
and~\ref{sec:minimizers}: the bound \eqref{supp:eq:weak} is attained at the
crossing and \eqref{supp:eq:cstate} saturates it. Beyond $q=c\equiv p_--p_+$,
where pair $C$ is fully exchanged and $\rho_*$ is again static, the
ordered level crossings continue $E_{\min}$ through three further
segments. With $d\equiv p_+-p_0>0$ and $(x)_+=\max(x,0)$,
\begin{equation}
\begin{aligned}
E_{\min}(q)&=E_G+2\varepsilon q+\tfrac{\Delta-5\varepsilon}{2}\,(q-c)_+
+\tfrac{\Delta-\varepsilon}{2}\,(q-c-2d)_+ +2\varepsilon\,(q-2c-2d)_+,\\
&\qquad\qquad 0\le q\le2c+3d,
\end{aligned}
\label{supp:eq:strongcont}
\end{equation}
with the slopes $2\varepsilon$, $(\Delta-\varepsilon)/2$,
$\Delta-\varepsilon$ and $\Delta+\varepsilon$ on segments of capacity $c$,
$2d$, $c$ and $d$: the successive crossings exchange $\kE{+}{-\tfrac12}$
with $\kE{\tfrac32}{\tfrac32}$, then $\kE{+}{+\tfrac12}$ with
$\kE{\tfrac32}{\tfrac32}$, and finally $\kE{-}{-\tfrac12}$ with
$\kE{\tfrac32}{\tfrac12}$. The cap $2c+3d=2p_-+p_+-3p_0$ is the orbit
bound of Corollary~\ref{cor:cap}. The capacities are not all small at low
temperature: for $\beta\to\infty$, $c\to\frac12$ and $d\to0$, but the
third segment keeps the capacity $\frac12$.

\paragraph{Motion and observables on the doublet-mixing branch.}
The only coherence of $\rho_*$ in the joint basis is
\begin{equation}
\begin{aligned}
\bra{-,-\tfrac12}\rho_*(t)\kE{+}{+\tfrac12}
&=(p_--p_+)\sin\chi\cos\chi\,e^{-i\eta}\,e^{\,i(E_+-E_-)t/\hbar}\\
&=\sqrt{q\,(p_--p_+-q)}\;e^{\,i(2\varepsilon t/\hbar-\eta)},
\end{aligned}
\label{supp:eq:cohex}
\end{equation}
the rotation about $\n$ at the angular velocity $|\lamcr|/\hbar=2\varepsilon/\hbar$
with the period $T=\pi\hbar/\varepsilon$; the amplitude vanishes at both
end points, where $\rho_*$ is diagonal and static. Because the coherence
has $\Delta m=1$, the local moments themselves move. Their components
along $\n$ follow from Lemma~\ref{lem:single} and the diagonal part of
$\rho_*$, $\langle\sigma_i^{\n}\rangle=\frac{2q}{3}$, as on the Gibbs
orbit. The transverse components are
\begin{equation}
\langle\sigma_i^+\rangle_*(t)
=\bra{-,-\tfrac12}\rho_*(t)\kE{+}{+\tfrac12}\;\bra{+,\tfrac12}\sigma_i^+\kE{-}{-\tfrac12},
\qquad
\big|\bra{+,\tfrac12}\sigma_i^+\kE{-}{-\tfrac12}\big|=\tfrac23 ,
\label{supp:eq:localmom}
\end{equation}
so that each site carries a transverse moment of length\linebreak
$|\langle\boldsymbol\sigma_i^{\perp}\rangle|=\allowbreak 2|\langle\sigma_i^+\rangle|
=\frac43\sqrt{q(p_--p_+-q)}$. The matrix element in \eqref{supp:eq:localmom}
is the same on all sites in magnitude, and its phase advances by
$2\pi/3$ from site to site because $\kE{-}{-\tfrac12}$ and $\kE{+}{+\tfrac12}$
carry opposite $C_3$ eigenvalues $\omega_3^{\pm1}$: the three transverse
moments are $120^\circ$ apart, form an umbrella around $\n$ with the
constant opening set by $q$, sum to zero, and rotate rigidly about $\n$
with the period $\pi\hbar/\varepsilon$. The total moment stays at $q\,\n$,
so a uniform probe sees no precession; a site-resolved one does. The
double-raising correlations $\langle\sigma_i^+\sigma_j^+\rangle$, which
need a $\Delta m=2$ coherence, vanish on this branch (other two-spin
correlations need not: at zero temperature the branch lies in the
total-spin-$\frac12$ sector, where
$\sum_{i<j}\langle\boldsymbol\sigma_i\!\cdot\!\boldsymbol\sigma_j\rangle=-3$).
The roles of the single-spin and the double-raising observables are thus
exchanged between the two branches, according to whether the coherence
that carries the time dependence has $\Delta m=1$ or $\Delta m=2$.
Table~\ref{tab:exchange} verifies all of this numerically.

\paragraph{Weak exchange, $0\le\Delta<\varepsilon$.}
For completeness we record the opposite regime, which is not used in the
Letter. The level order is that of $H$, $p_->p_0>p_+$, and $B$ is now
cheaper than $A$. The multiplier bound is maximized first at the $B$
crossing and then at the $A$ crossing:
\begin{equation}
E_{\min}(q)=E_G+\begin{cases}
\tfrac{\varepsilon-\Delta}{2}\,q, & 0\le q\le q_B=2(p_0-p_+),\\[3pt]
\tfrac{\varepsilon-\Delta}{2}\,q_B+\tfrac{\varepsilon+\Delta}{2}\,(q-q_B), & q_B\le q\le q_{\max}=2(p_--p_+),
\end{cases}
\label{supp:eq:weakex}
\end{equation}
with $|\lamcr|=(\varepsilon\mp\Delta)/2$ in the two segments. The
coherence lives in pair $B$ for $q<q_B$ and in pair $A$ for $q>q_B$, with
pair $B$ fully exchanged. After pair $B$ has been exchanged, the further
crossings at $\mu=\varepsilon-\Delta$ involve levels of equal population
and do not change the passive value; the next crossing with nonzero
capacity is $A$ at $\mu=(\varepsilon+\Delta)/2$, so \eqref{supp:eq:weakex}
holds throughout $0<\Delta<\varepsilon$. The double-raising correlations
oscillate at $(\varepsilon-\Delta)/\hbar$ and $(\varepsilon+\Delta)/\hbar$,
respectively: the frequency of the time crystal jumps at $q_B$, where
$E_{\min}$ has a kink; the point $q=q_B$ itself is static for $\Delta>0$.
At $\Delta=0$ the two crossings coincide and the full two-angle family of
Sec.~\ref{sec:minimizers} returns. The intermediate regime
$\varepsilon\le\Delta\le5\varepsilon$ follows from the same crossing
analysis but involves the $\Delta m=1$ pairs and is not reproduced here.

\section{Numerical method and results}
\label{sec:numerics}

\paragraph{Amplitudes and gauge.}
For finite positive temperature all eight populations of the Gibbs
spectrum are nonzero, and we choose the minimal full-rank amplitude
matrix, $W\equiv\widehat{\Psi}\in\mathbb C^{8\times8}$ with
$W(0)^\dagger W(0)=P=\mathrm{diag}(p_-,p_-,p_0,p_0,p_0,p_0,p_+,p_+)$; a
larger gauge dimension $r>8$ is allowed by the Letter but not needed. On
the $T=0$ orbit, whose spectrum $(\frac12,\frac12,0,\dots,0)$ has rank
two, the minimal choice is $W\in\mathbb C^{8\times2}$ with
$W(0)^\dagger W(0)=\mathrm{diag}(\frac12,\frac12)$. We use the gauge
$\mathcal A=0$ and the phase convention $\nu=0$, so that the
Uhlmann-amplitude form of Eq.~\eqref{eq:cov1} of the Letter reads
$i\hbar\dot W=(H+\hbar\kappa\,\calD_t\rho)W$.

\paragraph{The covariant velocity and the multipliers.}
At each evaluation the current density matrix $\rho=WW^\dagger$ is
diagonalized, $\rho=\sum_kp_k\ket{k}\bra{k}$, including its zero
eigenvalues when present. In this basis Eq.~\eqref{eq:cov} of the Letter is
algebraic,
\begin{equation}
(\calD_t\rho)_{kl}=\frac{(H+\lambda^aJ_a)_{kl}\,(p_l-p_k)}{\hbar\,[\,i-\kappa(p_l-p_k)\,]},
\label{supp:eq:Dalg}
\end{equation}
affine in $\lambda^a$. Charge conservation, $d\langle J_a\rangle/dt=
i\kappa\Tr(\calD_t\rho\,[J_a,\rho])=0$ [Eq.~\eqref{eq:charge} of the Letter], is then a real
$3\times3$ linear system $M_{ab}\lambda^b=-c_a$ with
$M_{ab}=i\Tr(\calD^{(b)}[J_a,\rho])$ and $c_a=i\Tr(\calD^{(0)}[J_a,\rho])$,
where $\calD^{(0)}$ and $\calD^{(b)}$ are \eqref{supp:eq:Dalg} evaluated with $H$
and with $J_b$, respectively. Setting $Q=c^aJ_a$ with real $c^a$ and
$d_{kl}=p_l-p_k$, Eq.~\eqref{supp:eq:Dalg} gives
\begin{equation}
c^aM_{ab}c^b=-\frac1\hbar\sum_{k,l}\frac{d_{kl}^2}{1+\kappa^2d_{kl}^2}\,|Q_{kl}|^2 ,
\label{supp:eq:Mdef}
\end{equation}
so the symmetric part of $M$ is negative definite exactly when the
commutators $[J_a,\rho]$ are linearly independent over the reals; this is
the uniqueness used in Sec.~\ref{sec:minimizers}. The system loses rank
whenever a real linear combination of the generators commutes with $\rho$:
at the Gibbs state, at the cap, and at the two static interior members of
\eqref{supp:eq:static}. A null direction $\mathbf c$ of $M$ has
$c^ac_a=i\Tr(\calD^{(0)}[\mathbf c\!\cdot\!\J,\rho])=0$, so an exact
compatible solution exists and the singularity is a matter of
conditioning. We solve the system by least squares with the minimum-norm
solution (singular values below $3\epsilon_{\rm mach}\approx7\times10^{-16}$
times the largest one are discarded) and monitor the residual
$\|M\boldsymbol\lambda+\mathbf c\|$ and the norm of
$\lambda^a[J_a,\rho]$, which is the combination that enters the flow.
Along every run of Tables~\ref{tab:attractors} and \ref{tab:series} the
residual stays below $2\times10^{-15}$, $\|\lambda^a[J_a,\rho]\|_{HS}$
below $0.35\,\varepsilon$ and $|\boldsymbol\lambda|$ below
$0.91\,\varepsilon$, including the approach to the singular Gibbs
endpoint. At the exact Gibbs state
$M=\mathbf c=0$ and the minimum-norm solution is $\boldsymbol\lambda=0$;
the nonzero limiting multiplier seen along a trajectory before that state
is reached is a path- and tolerance-dependent diagnostic
(Remark~\ref{rem:cusp}), not an invariant of the endpoint.

\paragraph{Interaction picture.}
Because $[H,J_a]=0$, the substitution $W=e^{-iHt/\hbar}W_I$ removes the
Hamiltonian rotation from the integration without approximation: the
eigenbasis of $\rho_I=W_IW_I^\dagger$ is that of $\rho$ rotated by
$e^{iHt/\hbar}$, the matrix elements of $H+\lambda^aJ_a$ in it are
unchanged, and so are $M_{ab}$ and $c_a$. Hence
$i\hbar\dot W_I=\hbar\kappa\,\calD_t\rho[\rho_I]\,W_I$ with the same
multipliers, and $\rho(t)=e^{-iHt/\hbar}\rho_I(t)e^{iHt/\hbar}$ exactly.
This separates the relaxation, which proceeds at rates of order
$10^{-2}$--$10^{-1}\,\omega$ (Table~\ref{tab:attractors}), from the
oscillation at $\sqrt3\,\omega$. The physical
velocity is reconstructed as $\|\dot\rho\|=\|[H+\hbar\kappa\calD_t\rho,\rho_I]\|/\hbar$,
and all observables are evaluated in the Schr\"odinger picture.

\paragraph{Integration, initial states and diagnostics.}
The amplitude equation is integrated with an adaptive eighth-order
Runge--Kutta method (DOP853) at relative tolerance $10^{-10}$ and absolute
tolerance $10^{-12}$, with \eqref{supp:eq:Dalg} and the $3\times3$ system solved
at every internal stage. Initial states are random points on the orbit:
a Haar-random unitary $U$ (pseudo-random generator seeded with 20261028,
the unitaries being drawn in the order of the table rows) is applied to
the diagonal Gibbs amplitudes, $W(0)=UP^{1/2}$ for the $8\times8$ case and
$W(0)=U\,\mathrm{diag}(\frac12,\frac12)^{1/2}$ restricted to two columns at
$T=0$, followed by a damped Newton correction along the orbit with the
spectrum-preserving generators $i[\rho,J_a]$ until
$|\langle\J\rangle-q\,\n|<10^{-13}$. No information about the minimizers
enters. Each run is integrated to the fixed final time $T_{\rm end}=600/\omega$
($3000/\omega$ at $\beta\varepsilon=4$, where the relaxation is about
seven times slower); the terminal covariant velocity $\|\calD_t\rho\|$,
which is the quantity that must vanish on a relative equilibrium, is
listed in the tables and lies between $10^{-11}$ and $10^{-6}$. All
quantities are evaluated at $t=T_{\rm end}$ and all norms are
Hilbert--Schmidt (Frobenius) norms. Decay rates are asymptotic rates: least-squares slopes of
$-\ln\|\calD_t\rho\|$ against $t$ over the last two decades above the
floor of about $10^{-10}$ set by the integration tolerance,
$10^{-8}\le\|\calD_t\rho\|\le10^{-6}$; this window excludes the initial
transient, and the slope is stable against its choice for every run with
$q>0$. Along every run the spectrum (hence the entropy) stays constant
to $4\times10^{-9}$ and $\langle\J\rangle$ to $2\times10^{-13}$, and the
energy decreases monotonically. The initial matrices, the trajectories and
the raw data of Tables~\ref{tab:attractors}--\ref{tab:exchange} and of
Fig.~\ref{fig:gibbs} of the Letter are deposited with the simulation codes in the
repository quoted in the Letter (tagged release and data-file names to be
given there).

\paragraph{Results.}
Tables~\ref{tab:attractors}--\ref{tab:exchange} list every run of the
Letter: the final states of Fig.~\ref{fig:gibbs}(a) reached from one random initial state
per charge on the Gibbs orbits at $\beta\varepsilon=\frac12,1,2,4$ and at
$T=0$ (Table~\ref{tab:attractors}); the two-runs-per-charge series of
Fig.~\ref{fig:gibbs}(b) at $\beta\varepsilon=1$, together with runs at the isolated
charges $q_A$ and $q_B$ of \eqref{supp:eq:static} (Table~\ref{tab:series}); and
the runs on the exchange-shifted orbit, $\Delta=6\varepsilon$,
$\beta\varepsilon=1$, of Fig.~\ref{fig:gibbs}(c) (Table~\ref{tab:exchange}). All numbers
are quoted at the end of the run.

On the Gibbs orbits every sampled run with $q>0$ in
Tables~\ref{tab:attractors} and \ref{tab:series} ends on the family
\eqref{supp:eq:rhostar}--\eqref{supp:eq:family}. Two diagnostics establish this:
the largest forbidden coherence, i.e.\ the largest matrix element in the
joint basis outside the diagonal and the two pair coherences (column
``forbidden coh.''), and the residual
$r_{\rm fam}=\min\|\rho-\rho_*(\theta,\vartheta,\phi,\varphi)\|_{HS}$ of
the fit to the family (column ``$r_{\rm fam}$''), which also tests the
diagonal; the latter is below $10^{-6}$ in every run with $q>0$, and below
$2\times10^{-8}$ at $\beta\varepsilon=1$ and at $T=0$. The angles read off the populations
of $\kE{\tfrac32}{\tfrac32}$ and $\kE{+}{+\tfrac12}$ satisfy the constraint
\eqref{supp:eq:qE} (column ``$\delta_q$''), the energy agrees with
\eqref{supp:eq:Emin} to $3\times10^{-10}\,\varepsilon$, the multiplier is
$-\frac{\varepsilon}{2}\n$ to $2\times10^{-11}$ in magnitude and to
$2\times10^{-13}$ in direction, the three local moments equal
$\frac{2q}{3}\n$ to $6\times10^{-9}$, $|\langle\sigma_i^\pm\rangle|$ is
below $4\times10^{-7}$, the amplitude of
$\langle\sigma_1^+\sigma_2^+\rangle$ lies inside the band of
\eqref{supp:eq:amp}, and the physical velocity $\|\dot\rho\|$ agrees with
Eq.~\eqref{supp:eq:rdot} to $10^{-9}$ (Table~\ref{tab:series}). Both azimuths
advance at the rate $-\varepsilon/\hbar$ with constant difference
$\phi-\varphi$. Runs with the same $q$ end at different
$(\sin^2\theta,\sin^2\vartheta)$ and different amplitudes, as the
degeneracy of the family implies, and the runs at the isolated charges
$q_A$ and $q_B$ end at moving members. The sampled runs at $q=0$ approach
$\rho_G$: at $T_{\rm end}$ their distance $\|\rho-\rho_G\|_{HS}$ is between
$7\times10^{-7}$ and $2\times10^{-6}$ and still decreasing, with a charge
error $|\langle\J\rangle|$ below $10^{-14}$, and the multiplier along these
particular trajectories has $|\boldsymbol\lambda|/\varepsilon=\frac12$ to
$3\times10^{-6}$ with a direction $\boldsymbol\lambda/|\boldsymbol\lambda|$
that differs from run to run (Remark~\ref{rem:cusp}). The covariant
velocity decays exponentially: for $q>0$ at the asymptotic rates
$0.020$--$0.021\,\omega$ at $\beta\varepsilon=\frac12$ and $2$,
$0.027$--$0.029\,\omega$ at $\beta\varepsilon=1$, $0.0038\,\omega$ at
$\beta\varepsilon=4$ and $0.20\,\omega$ at $T=0$, from early times on;
at $q=0$ a faster initial stage (about $0.05\,\omega$ at
$\beta\varepsilon=1$) precedes the final-stage rate listed in the tables. The physical velocity $\|\dot\rho\|$ stays finite for $q>0$
and vanishes only at $q=0$.

On the exchange-shifted orbit every sampled run with $q>0$ in
Table~\ref{tab:exchange} ends on the doublet-mixing branch
\eqref{supp:eq:cstate} (single-block fit residual $r_{\rm fam}<10^{-9}$): the
energy agrees with \eqref{supp:eq:strongex}, $|\lamcr|=2\varepsilon$ along
$-\n$, $\sin^2\chi=q/(p_--p_+)$, the coherence has the amplitude of
\eqref{supp:eq:cohex} and rotates at $2\varepsilon/\hbar$, and the local moments
precess as described after \eqref{supp:eq:localmom}: transverse length
$\frac43\sqrt{q(p_--p_+-q)}$ on every site, phase lags of $2\pi/3$, zero
transverse sum, constant component $\frac{2q}{3}$ along $\n$, and
vanishing double-raising correlations $\langle\sigma_i^+\sigma_j^+\rangle$.
The covariant velocity decays at the asymptotic rate $0.051\,\omega$ for $q>0$. The sampled run
at $q=0$ approaches $\rho_G$, to the distance $5\times10^{-6}$ at
$T_{\rm end}$, along a trajectory on which $|\boldsymbol\lambda|\to2\varepsilon$.

\clearpage
\setlength{\simulationtableheight}{\dimexpr\textwidth-12pt\relax}
\begin{sidewaystable}[p]
\centering
\begin{adjustbox}{max totalsize={\dimexpr\textheight-6pt\relax}{\simulationtableheight},center}
\begin{minipage}{\textheight}
\renewcommand{\baselinestretch}{1}\selectfont
\renewcommand{\small}{\fontsize{9}{11}\selectfont}
\renewcommand{\arraystretch}{0.88}
\centering
\caption{Final states of the covariant qLLG flow on Gibbs orbits [Fig.~\ref{fig:gibbs}(a) of
the Letter], one random initial state per row, $\kappa=0.5$,
$\n=\hat{\mathbf z}$, $q/q_{\max}=0,0.2,0.4,0.6,0.8,0.97$ with
$q_{\max}=\tanh(\beta\varepsilon/2)$ ($q_{\max}=1$ at $T=0$); every quantity
is evaluated at the end of the run, $t=T_{\rm end}$. $E$ is the energy and
$E-E_{\min}$ its deviation from Eq.~\eqref{supp:eq:Emin};
$\lambda_z/|\boldsymbol\lambda|$ is the direction of the multiplier ($-1$
means $\lamcr\parallel-\n$; at $q=0$ it depends on the initial state, see
Remark~\ref{rem:cusp}); $\sin^2\theta$ and $\sin^2\vartheta$ are read off
the populations of $\kE{\tfrac32}{\tfrac32}$ and $\kE{+}{+\tfrac12}$, and
$\delta_q=2(p_--p_0)\sin^2\theta+2(p_0-p_+)\sin^2\vartheta-q$ is the
residual of the constraint \eqref{supp:eq:qE}; ``forbidden coh.'' is the largest
matrix element of $\rho$ in the joint basis outside the diagonal and the
two pair coherences, a diagnostic of forbidden off-diagonal coherences;
$r_{\rm fam}=\|\rho-\rho_*(\theta,\vartheta,\phi,\varphi)\|_{HS}$ is the
residual of the fit to the family \eqref{supp:eq:rhostar}--\eqref{supp:eq:family},
with the angles taken from the populations and the phases from the pair
coherences (at $q=0$ the fitted member is $\rho_G$ itself and $r_{\rm fam}$
is the terminal distance from the Gibbs state); $\|\calD_t\rho\|$ is the
terminal covariant velocity; ``rate''
is the asymptotic decay rate of $\|\calD_t\rho\|$ in units of $\omega$,
the least-squares slope of $-\ln\|\calD_t\rho\|$ over the last two
decades above the integration floor, $10^{-8}\le\|\calD_t\rho\|\le10^{-6}$
($^{\dagger}$at $q=0$ this final stage is preceded by a faster initial
decay, see Sec.~\ref{sec:numerics}); the last two columns are the amplitude of
$\langle\sigma_1^+\sigma_2^+\rangle$ and the band allowed by
Eq.~\eqref{supp:eq:amp} at that charge, rounded outward to four decimals; at
$T=0$ the band is the single value $\sqrt{q(1-q)}/2\sqrt3$, given to five
decimals. All norms are Hilbert--Schmidt (Frobenius) norms and all times
are in units of $1/\omega$. Residuals, predictions and bands are evaluated
from unrounded values; the displayed coordinates are rounded independently.
At $T=0$ pair $B$ has no weight and $\sin^2\vartheta$ is undefined.}
\label{tab:attractors}
\footnotesize
\setlength{\tabcolsep}{3pt}
\begin{adjustbox}{max width=\linewidth,center}
\begin{tabular}{ccccccccccccccc}
\toprule
$\beta\varepsilon$ & $q$ & $E/\varepsilon$ & $(E-E_{\min})/\varepsilon$ & $|\lamcr|/\varepsilon$ & $\lambda_z/|\boldsymbol\lambda|$
& $\sin^2\theta$ & $\sin^2\vartheta$ & $\delta_q$ & forbidden coh. & $r_{\rm fam}$ & $\|\calD_t\rho\|$ & rate & $|\langle\sigma_1^+\sigma_2^+\rangle|$ & band \eqref{supp:eq:amp}\\
\midrule
$\tfrac12$ & 0.0000 & $-0.244919$ & $2\times10^{-11}$ & 0.500003 & $+0.4974$ & 0.0000 & 0.0000 & $10^{-11}$ & $3\times10^{-7}$ & $10^{-6}$ & $3\times10^{-9}$ & 0.0223$^{\dagger}$ & 0.0000 & [0.0000,\,0.0000] \\
$\tfrac12$ & 0.0490 & $-0.220427$ & $10^{-11}$ & 0.500000 & $-1.0000$ & 0.2786 & 0.0703 & $2\times10^{-12}$ & $3\times10^{-7}$ & $4\times10^{-7}$ & $4\times10^{-7}$ & 0.0202 & 0.0231 & [0.0000,\,0.0283] \\
$\tfrac12$ & 0.0980 & $-0.195935$ & $7\times10^{-12}$ & 0.500000 & $-1.0000$ & 0.5962 & 0.0766 & $-2\times10^{-11}$ & $5\times10^{-7}$ & $7\times10^{-7}$ & $6\times10^{-7}$ & 0.0200 & 0.0204 & [0.0049,\,0.0347] \\
$\tfrac12$ & 0.1470 & $-0.171443$ & $9\times10^{-12}$ & 0.500000 & $-1.0000$ & 0.4270 & 0.8851 & $3\times10^{-12}$ & $9\times10^{-8}$ & $2\times10^{-7}$ & $10^{-7}$ & 0.0200 & 0.0245 & [0.0049,\,0.0347] \\
$\tfrac12$ & 0.1959 & $-0.146951$ & $2\times10^{-11}$ & 0.500000 & $-1.0000$ & 0.8790 & 0.6697 & $9\times10^{-12}$ & $2\times10^{-7}$ & $5\times10^{-7}$ & $4\times10^{-7}$ & 0.0202 & 0.0258 & [0.0000,\,0.0283] \\
$\tfrac12$ & 0.2376 & $-0.126133$ & $2\times10^{-11}$ & 0.500000 & $-1.0000$ & 0.9930 & 0.9321 & $10^{-11}$ & $2\times10^{-7}$ & $3\times10^{-7}$ & $2\times10^{-7}$ & 0.0205 & 0.0090 & [0.0000,\,0.0121] \\
\addlinespace[3pt]
1 & 0.0000 & $-0.462117$ & $10^{-11}$ & 0.500000 & $+0.5078$ & 0.0000 & 0.0000 & $2\times10^{-12}$ & $4\times10^{-7}$ & $10^{-6}$ & $3\times10^{-11}$ & 0.0344$^{\dagger}$ & 0.0000 & [0.0000,\,0.0000] \\
1 & 0.0924 & $-0.415905$ & $2\times10^{-12}$ & 0.500000 & $-1.0000$ & 0.1237 & 0.4073 & $-4\times10^{-12}$ & $9\times10^{-10}$ & $2\times10^{-9}$ & $2\times10^{-9}$ & 0.0269 & 0.0292 & [0.0000,\,0.0534] \\
1 & 0.1848 & $-0.369694$ & $9\times10^{-12}$ & 0.500000 & $-1.0000$ & 0.3984 & 0.4043 & $-3\times10^{-12}$ & $2\times10^{-9}$ & $4\times10^{-9}$ & $4\times10^{-9}$ & 0.0270 & 0.0426 & [0.0277,\,0.0654] \\
1 & 0.2773 & $-0.323482$ & $6\times10^{-12}$ & 0.500000 & $-1.0000$ & 0.6108 & 0.5705 & $5\times10^{-12}$ & $7\times10^{-10}$ & $2\times10^{-9}$ & $10^{-9}$ & 0.0288 & 0.0634 & [0.0277,\,0.0654] \\
1 & 0.3697 & $-0.277270$ & $10^{-11}$ & 0.500000 & $-1.0000$ & 0.8693 & 0.6117 & $2\times10^{-12}$ & $3\times10^{-9}$ & $6\times10^{-9}$ & $5\times10^{-9}$ & 0.0270 & 0.0443 & [0.0000,\,0.0534] \\
1 & 0.4483 & $-0.237990$ & $2\times10^{-12}$ & 0.500000 & $-1.0000$ & 0.9996 & 0.8894 & $7\times10^{-14}$ & $4\times10^{-9}$ & $7\times10^{-9}$ & $6\times10^{-9}$ & 0.0270 & 0.0129 & [0.0000,\,0.0228] \\
\addlinespace[3pt]
2 & 0.0000 & $-0.761594$ & $4\times10^{-11}$ & 0.500003 & $+0.4862$ & 0.0000 & 0.0000 & $10^{-12}$ & $5\times10^{-7}$ & $2\times10^{-6}$ & $6\times10^{-9}$ & 0.0198$^{\dagger}$ & 0.0000 & [0.0000,\,0.0000] \\
2 & 0.1523 & $-0.685435$ & $10^{-11}$ & 0.500000 & $-1.0000$ & 0.1443 & 0.6112 & $10^{-12}$ & $10^{-7}$ & $2\times10^{-7}$ & $2\times10^{-7}$ & 0.0196 & 0.0782 & [0.0515,\,0.0880] \\
2 & 0.3046 & $-0.609275$ & $10^{-11}$ & 0.500000 & $-1.0000$ & 0.4028 & 0.3795 & $-6\times10^{-12}$ & $10^{-7}$ & $2\times10^{-7}$ & $2\times10^{-7}$ & 0.0200 & 0.1069 & [0.0807,\,0.1078] \\
2 & 0.4570 & $-0.533116$ & $-10^{-11}$ & 0.500000 & $-1.0000$ & 0.5903 & 0.6716 & $9\times10^{-12}$ & $2\times10^{-7}$ & $4\times10^{-7}$ & $3\times10^{-7}$ & 0.0199 & 0.0846 & [0.0807,\,0.1078] \\
2 & 0.6093 & $-0.456956$ & $2\times10^{-11}$ & 0.500000 & $-1.0000$ & 0.8523 & 0.4132 & $-10^{-11}$ & $5\times10^{-7}$ & $10^{-6}$ & $8\times10^{-7}$ & 0.0198 & 0.0803 & [0.0515,\,0.0880] \\
2 & 0.7387 & $-0.392221$ & $2\times10^{-11}$ & 0.500000 & $-1.0000$ & 1.0000 & 0.7483 & $6\times10^{-12}$ & $10^{-7}$ & $3\times10^{-7}$ & $2\times10^{-7}$ & 0.0197 & 0.0114 & [0.0000,\,0.0376] \\
\addlinespace[3pt]
4 & 0.0000 & $-0.964028$ & $7\times10^{-11}$ & 0.500000 & $-0.8222$ & 0.0000 & 0.0000 & $-10^{-11}$ & $6\times10^{-7}$ & $2\times10^{-6}$ & $10^{-9}$ & 0.0045$^{\dagger}$ & 0.0000 & [0.0000,\,0.0000] \\
4 & 0.1928 & $-0.867625$ & $5\times10^{-11}$ & 0.500000 & $-1.0000$ & 0.2007 & 0.1636 & $5\times10^{-11}$ & $7\times10^{-8}$ & $10^{-7}$ & $10^{-7}$ & 0.0038 & 0.1077 & [0.1049,\,0.1114] \\
4 & 0.3856 & $-0.771222$ & $2\times10^{-10}$ & 0.500000 & $-1.0000$ & 0.4048 & 0.1360 & $-10^{-10}$ & $3\times10^{-7}$ & $5\times10^{-7}$ & $4\times10^{-7}$ & 0.0038 & 0.1327 & [0.1312,\,0.1364] \\
4 & 0.5784 & $-0.674819$ & $5\times10^{-11}$ & 0.500000 & $-1.0000$ & 0.6109 & 0.0057 & $3\times10^{-11}$ & $10^{-7}$ & $2\times10^{-7}$ & $2\times10^{-7}$ & 0.0038 & 0.1329 & [0.1312,\,0.1364] \\
4 & 0.7712 & $-0.578417$ & $10^{-10}$ & 0.500000 & $-1.0000$ & 0.8141 & 0.0324 & $8\times10^{-11}$ & $10^{-7}$ & $2\times10^{-7}$ & $2\times10^{-7}$ & 0.0038 & 0.1064 & [0.1049,\,0.1114] \\
4 & 0.9351 & $-0.496474$ & $2\times10^{-11}$ & 0.500000 & $-1.0000$ & 0.9699 & 0.9778 & $4\times10^{-11}$ & $7\times10^{-8}$ & $2\times10^{-7}$ & $10^{-7}$ & 0.0038 & 0.0471 & [0.0297,\,0.0475] \\
\addlinespace[3pt]
$T=0$ & 0.0000 & $-1.000000$ & $2\times10^{-12}$ & 0.500000 & $-0.1479$ & 0.0000 & -- & $10^{-13}$ & $4\times10^{-7}$ & $10^{-6}$ & $5\times10^{-11}$ & 0.1934$^{\dagger}$ & 0.0000 & $\{0\}$ \\
$T=0$ & 0.2000 & $-0.900000$ & $-8\times10^{-13}$ & 0.500000 & $-1.0000$ & 0.2000 & -- & $-4\times10^{-13}$ & $10^{-11}$ & $10^{-9}$ & $6\times10^{-11}$ & 0.2038 & 0.1155 & $\{0.11547\}$ \\
$T=0$ & 0.4000 & $-0.800000$ & $-3\times10^{-12}$ & 0.500000 & $-1.0000$ & 0.4000 & -- & $2\times10^{-12}$ & $4\times10^{-11}$ & $10^{-9}$ & $2\times10^{-10}$ & 0.2038 & 0.1414 & $\{0.14142\}$ \\
$T=0$ & 0.6000 & $-0.700000$ & $-3\times10^{-12}$ & 0.500000 & $-1.0000$ & 0.6000 & -- & $-4\times10^{-13}$ & $10^{-11}$ & $10^{-9}$ & $10^{-10}$ & 0.2038 & 0.1414 & $\{0.14142\}$ \\
$T=0$ & 0.8000 & $-0.600000$ & $-10^{-12}$ & 0.500000 & $-1.0000$ & 0.8000 & -- & $-8\times10^{-15}$ & $2\times10^{-11}$ & $10^{-9}$ & $10^{-10}$ & 0.2038 & 0.1155 & $\{0.11547\}$ \\
$T=0$ & 0.9700 & $-0.515000$ & $5\times10^{-12}$ & 0.500000 & $-1.0000$ & 0.9700 & -- & $5\times10^{-13}$ & $2\times10^{-11}$ & $5\times10^{-10}$ & $8\times10^{-11}$ & 0.2038 & 0.0492 & $\{0.04924\}$ \\
\bottomrule
\end{tabular}
\end{adjustbox}
\end{minipage}
\end{adjustbox}
\end{sidewaystable}
\clearpage
\begin{sidewaystable}[p]
\centering
\begin{adjustbox}{max totalsize={\dimexpr\textheight-6pt\relax}{\simulationtableheight},center}
\begin{minipage}{\textheight}
\renewcommand{\baselinestretch}{1}\selectfont
\centering
\caption{The runs of Fig.~\ref{fig:gibbs}(b) of the Letter, $\beta\varepsilon=1$, two
random initial states for each of $q=0,0.2,0.4$, followed by one run at
each of the isolated charges $q_A=2(p_--p_0)$ and $q_B=2(p_0-p_+)$ of
Eq.~\eqref{supp:eq:static}; $T_{\rm end}=600/\omega$. ``rate'' is the decay rate
of $\|\calD_t\rho\|$ fitted as in Table~\ref{tab:attractors}
($^{\dagger}$preceded by a faster initial stage at $q=0$), and $\|\dot\rho\|$ the physical
velocity at the end of the run, followed in parentheses by the prediction
of Eq.~\eqref{supp:eq:rdot} evaluated from the unrounded angles; the other
columns are as in Table~\ref{tab:attractors}. The two runs with the same
$q$ end at different members of the family, and the runs at $q_A$ and
$q_B$ end at moving members, not at the static ones
$(\sin^2\theta,\sin^2\vartheta)=(1,0)$ and $(0,1)$.}
\label{tab:series}
\footnotesize
\setlength{\tabcolsep}{3.5pt}
\begin{adjustbox}{max width=\linewidth,center}
\begin{tabular}{cccccccccccc}
\toprule
$q$ & $(E-E_{\min})/\varepsilon$ & rate & $\|\dot\rho\|$ (pred.) & $|\lamcr|/\varepsilon$ & $\lambda_z/|\boldsymbol\lambda|$
& $\sin^2\theta$ & $\sin^2\vartheta$ & $r_{\rm fam}$ & $\|\calD_t\rho\|$ & $|\langle\sigma_1^+\sigma_2^+\rangle|$ & band \eqref{supp:eq:amp}\\
\midrule
0.0000 & $3\times10^{-11}$ & 0.0295$^{\dagger}$ & $3\times10^{-6}$ ($0$) & 0.500000 & $+0.1864$ & 0.0000 & 0.0000 & $2\times10^{-6}$ & $5\times10^{-11}$ & 0.0000 & [0.0000,\,0.0000] \\
0.0000 & $10^{-11}$ & 0.0278$^{\dagger}$ & $2\times10^{-6}$ ($0$) & 0.500000 & $+0.4576$ & 0.0000 & 0.0000 & $10^{-6}$ & $7\times10^{-11}$ & 0.0000 & [0.0000,\,0.0000] \\
0.2000 & $9\times10^{-12}$ & 0.0269 & 0.1798 (0.1798) & 0.500000 & $-1.0000$ & 0.2411 & 0.9539 & $8\times10^{-9}$ & $7\times10^{-9}$ & 0.0492 & [0.0294,\,0.0661] \\
0.2000 & $4\times10^{-12}$ & 0.0269 & 0.2030 (0.2030) & 0.500000 & $-1.0000$ & 0.3146 & 0.7541 & $7\times10^{-9}$ & $6\times10^{-9}$ & 0.0521 & [0.0294,\,0.0661] \\
0.4000 & $7\times10^{-12}$ & 0.0269 & 0.0972 (0.0972) & 0.500000 & $-1.0000$ & 0.9777 & 0.5609 & $10^{-8}$ & $10^{-8}$ & 0.0299 & [0.0000,\,0.0456] \\
0.4000 & $2\times10^{-11}$ & 0.0274 & 0.0766 (0.0766) & 0.500000 & $-1.0000$ & 0.9996 & 0.5013 & $8\times10^{-9}$ & $7\times10^{-9}$ & 0.0188 & [0.0000,\,0.0456] \\
$q_A=0.3378$ & $6\times10^{-12}$ & 0.0269 & 0.1977 (0.1977) & 0.500000 & $-1.0000$ & 0.7113 & 0.7848 & $5\times10^{-9}$ & $4\times10^{-9}$ & 0.0381 & [0.0000,\,0.0592] \\
$q_B=0.1243$ & $2\times10^{-12}$ & 0.0270 & 0.2000 (0.2000) & 0.500000 & $-1.0000$ & 0.3529 & 0.0406 & $10^{-8}$ & $10^{-8}$ & 0.0395 & [0.0000,\,0.0592] \\
\bottomrule
\end{tabular}
\end{adjustbox}
\end{minipage}
\end{adjustbox}
\end{sidewaystable}
\clearpage
\begin{sidewaystable}[p]
\centering
\begin{adjustbox}{max totalsize={\dimexpr\textheight-6pt\relax}{\simulationtableheight},center}
\begin{minipage}{\textheight}
\renewcommand{\baselinestretch}{1}\selectfont
\centering
\caption{Final states on the exchange-shifted orbit, $\Delta=6\varepsilon$,
$\beta\varepsilon=1$ [Fig.~\ref{fig:gibbs}(c) of the Letter, bottom], random initial
states, $T_{\rm end}=600/\omega$, $p_--p_+=0.3802$. $E_{\min}$ is
Eq.~\eqref{supp:eq:strongex}; $\sin^2\chi$ is read off the population of
$\kE{+}{+\tfrac12}$, ``coh'' is the modulus of the coherence
\eqref{supp:eq:cohex}, and $|\langle\sigma_i^+\rangle|$ the transverse amplitude
of the local moments (identical on the three sites to $10^{-9}$). The
predictions $\sin^2\chi=q/(p_--p_+)$, $|{\rm coh}|=\sqrt{q(p_--p_+-q)}$
and $|\langle\sigma_i^+\rangle|=\frac23\sqrt{q(p_--p_+-q)}$ agree with
the tabulated values of the runs with $q>0$ to better than $10^{-10}$, $7\times10^{-10}$ and $5\times10^{-10}$,
respectively; the run at $q=0$ ends at the Hilbert--Schmidt distance $5\times10^{-6}$
from the Gibbs state, with a charge error below $10^{-15}$ and a residual
coherence that is still decreasing. The
double-raising channel $\max_{i\neq j}|\langle\sigma_i^+\sigma_j^+\rangle|$
vanishes; ``forbidden coh.'' is the largest matrix element in the joint
basis outside the diagonal and the coherence \eqref{supp:eq:cohex};
$r_{\rm fam}$ is the residual of the single-block fit to
\eqref{supp:eq:cstate}; $\|\calD_t\rho\|$ and ``rate'' are as in
Table~\ref{tab:attractors} ($^{\dagger}$preceded by a faster initial stage at $q=0$). In every
run with $q>0$ the multiplier points along $-\n$ ($\lambda_z/|\boldsymbol\lambda|=-1$ to $<10^{-15}$), the phase of
$\langle\sigma_i^+\rangle$ advances from site to site by $2\pi/3$ to
$3\times10^{-14}$, the transverse sum $|\sum_i\langle\sigma_i^+\rangle|$ is below $8\times10^{-15}$, the
phase of $\langle\sigma_1^+\rangle$ advances at $3.4641\,\omega$ ($=2\varepsilon/\hbar$ to $6\times10^{-11}$),
and $\langle\sigma_i^{z}\rangle=2q/3$ on all sites to $2\times10^{-10}$ (at $q=0$, to $4\times10^{-6}$).}
\label{tab:exchange}
\footnotesize
\setlength{\tabcolsep}{4pt}
\begin{adjustbox}{max width=\linewidth,center}
\begin{tabular}{ccccccccccc}
\toprule
$q$ & $(E-E_{\min})/\varepsilon$ & $|\lamcr|/\varepsilon$ & $\sin^2\chi$ & coh & $|\langle\sigma_i^+\rangle|$
& $\max|\langle\sigma_i^+\sigma_j^+\rangle|$ & forbidden coh. & $r_{\rm fam}$ & $\|\calD_t\rho\|$ & rate\\
\midrule
0.00 & $10^{-10}$ & 2.000000 & 0.0000 & 0.0000 & 0.0000 & $3\times10^{-12}$ & $2\times10^{-6}$ & $4\times10^{-6}$ & $5\times10^{-11}$ & 0.0586$^{\dagger}$ \\
0.10 & $9\times10^{-11}$ & 2.000000 & 0.2630 & 0.1674 & 0.1116 & $4\times10^{-12}$ & $10^{-10}$ & $8\times10^{-10}$ & $4\times10^{-10}$ & 0.0511 \\
0.20 & $9\times10^{-11}$ & 2.000000 & 0.5261 & 0.1898 & 0.1266 & $4\times10^{-12}$ & $3\times10^{-11}$ & $10^{-9}$ & $2\times10^{-10}$ & 0.0511 \\
0.20 & $2\times10^{-11}$ & 2.000000 & 0.5261 & 0.1898 & 0.1266 & $2\times10^{-12}$ & $10^{-11}$ & $10^{-9}$ & $10^{-10}$ & 0.0511 \\
0.30 & $5\times10^{-11}$ & 2.000000 & 0.7891 & 0.1551 & 0.1034 & $4\times10^{-12}$ & $3\times10^{-11}$ & $8\times10^{-10}$ & $2\times10^{-10}$ & 0.0511 \\
\bottomrule
\end{tabular}
\end{adjustbox}
\end{minipage}
\end{adjustbox}
\end{sidewaystable}
\clearpage

\begingroup
\let\SavedSupplementLabel\label
\renewcommand{\label}[1]{%
  \def\SupplementLabelName{#1}%
  \def\RevtexLastBibLabel{LastBibItem}%
  \ifx\SupplementLabelName\RevtexLastBibLabel
    \SavedSupplementLabel{supp:LastBibItem}%
  \else
    \SavedSupplementLabel{#1}%
  \fi}

\endgroup


\begin{thebibliography}{99}
\urlstyle{same}

\bibitem{Wilczek2012} F.~Wilczek, Quantum time crystals, Phys. Rev. Lett. \textbf{109}, 160401 (2012), \url{https://doi.org/10.1103/PhysRevLett.109.160401}.

\bibitem{ShapereWilczek2012} A.~Shapere and F.~Wilczek, Classical time crystals, Phys. Rev. Lett. \textbf{109}, 160402 (2012), \url{https://doi.org/10.1103/PhysRevLett.109.160402}.

\bibitem{Else2016} D.~V. Else, B.~Bauer, and C.~Nayak, Floquet time crystals, Phys. Rev. Lett. \textbf{117}, 090402 (2016), \url{https://doi.org/10.1103/PhysRevLett.117.090402}.

\bibitem{Khemani2016} V.~Khemani, A.~Lazarides, R.~Moessner, and S.~L. Sondhi, Phase structure of driven quantum systems, Phys. Rev. Lett. \textbf{116}, 250401 (2016), \url{https://doi.org/10.1103/PhysRevLett.116.250401}.

\bibitem{Yao2017} N.~Y. Yao, A.~C. Potter, I.-D. Potirniche, and A.~Vishwanath, Discrete time crystals: Rigidity, criticality, and realizations, Phys. Rev. Lett. \textbf{118}, 269901 (2017), \url{https://doi.org/10.1103/PhysRevLett.118.269901}.

\bibitem{Zhang2017} J.~Zhang \emph{et al.}, Observation of a discrete time crystal, Nature \textbf{543}, 217 (2017), \url{https://doi.org/10.1038/nature21413}.

\bibitem{Choi2017} S.~Choi \emph{et al.}, Observation of discrete time-crystalline order in a disordered dipolar many-body system, Nature \textbf{543}, 221 (2017), \url{https://doi.org/10.1038/nature21426}.

\bibitem{Iemini2018} F.~Iemini, A.~Russomanno, J.~Keeling, M.~Schir\`o, M.~Dalmonte, and R.~Fazio, Boundary time crystals, Phys. Rev. Lett. \textbf{121}, 035301 (2018), \url{https://doi.org/10.1103/PhysRevLett.121.035301}.

\bibitem{Kongkhambut2022} P.~Kongkhambut, J.~Skulte, L.~Mathey, J.~G. Cosme, A.~Hemmerich, and H.~Ke\ss ler, Observation of a continuous time crystal, Science \textbf{377}, 670 (2022), \url{https://doi.org/10.1126/science.abo3382}.

\bibitem{Sacha2018} K.~Sacha and J.~Zakrzewski, Time crystals: A review, Rep. Prog. Phys. \textbf{81}, 016401 (2018), \url{https://doi.org/10.1088/1361-6633/aa8b38}.

\bibitem{Zaletel2023} M.~P. Zaletel, M.~Lukin, C.~Monroe, C.~Nayak, F.~Wilczek, and N.~Y. Yao, Colloquium: Quantum and classical discrete time crystals, Rev. Mod. Phys. \textbf{95}, 031001 (2023), \url{https://doi.org/10.1103/RevModPhys.95.031001}.

\bibitem{Bruno2013} P.~Bruno, Impossibility of spontaneously rotating time crystals: A no-go theorem, Phys. Rev. Lett. \textbf{111}, 070402 (2013), \url{https://doi.org/10.1103/PhysRevLett.111.070402}.

\bibitem{WatanabeOshikawa2015} H.~Watanabe and M.~Oshikawa, Absence of quantum time crystals, Phys. Rev. Lett. \textbf{114}, 251603 (2015), \url{https://doi.org/10.1103/PhysRevLett.114.251603}.

\bibitem{DNPW2019} J.~Dai, A.~J. Niemi, X.~Peng, and F.~Wilczek, Truncated dynamics, ring molecules, and mechanical time crystals, Phys. Rev. A \textbf{99}, 023425 (2019), \url{https://doi.org/10.1103/PhysRevA.99.023425}.

\bibitem{ADN2020} A.~Alekseev, J.~Dai, and A.~J. Niemi, Provenance of classical Hamiltonian time crystals, J. High Energy Phys. \textbf{08} (2020) 035, \url{https://doi.org/10.1007/JHEP08(2020)035}.

\bibitem{MarsdenRatiu} J.~E. Marsden and T.~S. Ratiu, \emph{Introduction to Mechanics and Symmetry}, 2nd ed. (Springer, New York, 1999), \url{https://doi.org/10.1007/978-0-387-21792-5}.

\bibitem{Coleman1985} S.~Coleman, Q-balls, Nucl. Phys. B \textbf{262}, 263 (1985), \url{https://doi.org/10.1016/0550-3213(86)90520-1}. 

\bibitem{LeePang1992} T.~D. Lee and Y.~Pang, Nontopological solitons, Phys. Rep. \textbf{221}, 251 (1992), \url{https://doi.org/10.1016/0370-1573(92)90064-7}.

\bibitem{GDN2021} J.~Garaud, J.~Dai, and A.~J. Niemi, Vortex precession and exchange in a Bose--Einstein condensate, J. High Energy Phys. \textbf{07} (2021) 157, \url{https://doi.org/10.1007/JHEP07(2021)157}.

\bibitem{Uhlmann1986} A.~Uhlmann, Parallel transport and ``quantum holonomy'' along density operators, Rep. Math. Phys. \textbf{24}, 229 (1986), \url{https://doi.org/10.1016/0034-4877(86)90055-8}.

\bibitem{Uhlmann1991} A.~Uhlmann, A gauge field governing parallel transport along mixed states, Lett. Math. Phys. \textbf{21}, 229 (1991), \url{https://doi.org/10.1007/BF00420373}.

\bibitem{Tronci2019} C.~Tronci, Momentum maps for mixed states in quantum and classical mechanics, J. Geom. Mech. \textbf{11}, 639 (2019), \url{https://doi.org/10.3934/jgm.2019032}.

\bibitem{Niemi2025} A.~J. Niemi, No massless Goldstone bosons in Hamiltonian time crystals, Phys. Rev. Lett. \textbf{135}, 230201 (2025), \url{https://doi.org/10.1103/sytq-kjbr}.

\bibitem{Liu2024} Y.~Liu, I.~P. Miranda, L.~Johnson, A.~Bergman, A.~Delin, D.~Thonig, M.~Pereiro, O.~Eriksson, V.~Azimi-Mousolou, and E.~Sj\"oqvist, Quantum analog of Landau--Lifshitz--Gilbert dynamics, Phys. Rev. Lett. \textbf{133}, 266704 (2024), \url{https://doi.org/10.1103/PhysRevLett.133.266704}.

\bibitem{LandauLifshitz1935} L.~D. Landau and E.~M. Lifshitz, On the theory of the dispersion of magnetic permeability in ferromagnetic bodies, Phys. Z. Sowjetunion \textbf{8}, 153 (1935); reprinted in \emph{Collected Papers of L.~D. Landau}, edited by D.~ter Haar (Pergamon, Oxford, 1965), p.~101, \url{https://doi.org/10.1016/B978-0-08-010586-4.50023-7}.

\bibitem{Gilbert2004} T.~L. Gilbert, A phenomenological theory of damping in ferromagnetic materials, IEEE Trans. Magn. \textbf{40}, 3443 (2004), \url{https://doi.org/10.1109/TMAG.2004.836740}.

\bibitem{Trif2008} M.~Trif, F.~Troiani, D.~Stepanenko, and D.~Loss, Spin-electric coupling in molecular magnets,
Phys. Rev. Lett. \textbf{101}, 217201 (2008), \url{https://doi.org/10.1103/PhysRevLett.101.217201}.

\bibitem{GeorgeotMila2010} B.~Georgeot and F.~Mila, Chirality of triangular antiferromagnetic clusters as a qubit, Phys. Rev. Lett. \textbf{104}, 200502 (2010), \url{https://doi.org/10.1103/PhysRevLett.104.200502}.

\bibitem{HJW1993} L.~P. Hughston, R.~Jozsa, and W.~K. Wootters, A complete classification of quantum ensembles having a given density matrix, Phys. Lett. A \textbf{183}, 14 (1993), \url{https://doi.org/10.1016/0375-9601(93)90880-9}.

\bibitem{Kibble1979} T.~W.~B. Kibble, Geometrization of quantum mechanics, Commun. Math. Phys. \textbf{65}, 189 (1979), \url{https://doi.org/10.1007/BF01225149}.

\bibitem{GKM2005} J.~Grabowski, M.~Ku\'s, and G.~Marmo, Geometry of quantum systems: Density states and entanglement, J. Phys. A \textbf{38}, 10217 (2005), \url{https://doi.org/10.1088/0305-4470/38/47/011}.

\bibitem{Gisin1981} N.~Gisin, A simple nonlinear dissipative quantum evolution equation, J. Phys. A \textbf{14}, 2259 (1981), \url{https://doi.org/10.1088/0305-4470/14/9/021}.

\bibitem{SenChitra1995} D.~Sen and R.~Chitra, Large-$U$ limit of a Hubbard model in a magnetic field: Chiral spin interactions and paramagnetism, Phys. Rev. B \textbf{51}, 1922 (1995), \url{https://doi.org/10.1103/PhysRevB.51.1922}.

\bibitem{BBMK2008} L.~N. Bulaevskii, C.~D. Batista, M.~V. Mostovoy, and D.~I. Khomskii, Electronic orbital currents and polarization in Mott insulators, Phys. Rev. B \textbf{78}, 024402 (2008), \url{https://doi.org/10.1103/PhysRevB.78.024402}.

\bibitem{Claassen2017} M.~Claassen, H.-C. Jiang, B.~Moritz, and T.~P. Devereaux, Dynamical time-reversal symmetry breaking and photo-induced chiral spin liquids in frustrated Mott insulators, Nat. Commun. \textbf{8}, 1192 (2017), \url{https://doi.org/10.1038/s41467-017-00876-y}.

\bibitem{SM} See Supplementary Material

\bibitem{Trif2010} M.~Trif, F.~Troiani, D.~Stepanenko, and D.~Loss, Spin electric effects in molecular antiferromagnets, Phys. Rev. B \textbf{82}, 045429 (2010), \url{https://doi.org/10.1103/PhysRevB.82.045429}.

\bibitem{Chiorescu2000} I.~Chiorescu, W.~Wernsdorfer, A.~M\"uller, H.~B\"ogge, and B.~Barbara, Butterfly hysteresis loop and dissipative spin reversal in the $S=1/2$, V$_{15}$ molecular complex, Phys. Rev. Lett. \textbf{84}, 3454 (2000), \url{https://doi.org/10.1103/PhysRevLett.84.3454}.

\bibitem{Boudalis2018} A.~K. Boudalis, J.~Robert, and P.~Turek, First demonstration of magnetoelectric coupling in a polynuclear molecular nanomagnet: Single-crystal EPR studies of [Fe$_3$O(O$_2$CPh)$_6$(py)$_3$]ClO$_4\cdot$py under static electric fields, Chem. Eur. J. \textbf{24}, 14896 (2018), \url{https://doi.org/10.1002/chem.201803038}.

\bibitem{HsiehHawrylak2010} C.-Y. Hsieh and P.~Hawrylak, Quantum circuits based on coded qubits encoded in chirality of electron spin complexes in triple quantum dots, Phys. Rev. B \textbf{82}, 205311 (2010), \url{https://doi.org/10.1103/PhysRevB.82.205311}.

\bibitem{Roushan2017} P.~Roushan \emph{et al.}, Chiral ground-state currents of interacting photons in a synthetic magnetic field, Nat. Phys. \textbf{13}, 146 (2017), \url{https://doi.org/10.1038/nphys3930}.

\bibitem{Lienhard2020} V.~Lienhard, P.~Scholl, S.~Weber, D.~Barredo, S.~de L\'es\'eleuc, R.~Bai, N.~Lang, M.~Fleischhauer, H.~P. B\"uchler, T.~Lahaye, and A.~Browaeys, Realization of a density-dependent Peierls phase in a synthetic, spin-orbit coupled Rydberg system, Phys. Rev. X \textbf{10}, 021031 (2020), \url{https://doi.org/10.1103/PhysRevX.10.021031}.



%
%
%
%
%

\end{thebibliography}

\begin{thebibliography}{9}
\small
\bibitem{vonNeumann1937} J. von Neumann, Some matrix-inequalities and
metrization of matric-space, Tomsk Univ. Rev. \textbf{1}, 286 (1937);
reprinted in \emph{Collected Works}, Vol.~IV (Pergamon, Oxford, 1962).
\bibitem{Birkhoff1946} G. Birkhoff, Tres observaciones sobre el algebra
lineal, Univ. Nac. Tucum\'an Rev. Ser.~A \textbf{5}, 147 (1946).
\bibitem{HLP} G. H. Hardy, J. E. Littlewood, and G. P\'olya,
\emph{Inequalities} (Cambridge University Press, Cambridge, 1934),
Sec.~10.2.
\bibitem{PuszWoronowicz1978} W. Pusz and S. L. Woronowicz, Passive states
and KMS states for general quantum systems, Commun. Math. Phys.
\textbf{58}, 273 (1978).
\bibitem{Lenard1978} A. Lenard, Thermodynamical proof of the Gibbs formula
for elementary quantum systems, J. Stat. Phys. \textbf{19}, 575 (1978).
\bibitem{Allahverdyan2004} A. E. Allahverdyan, R. Balian, and
Th. M. Nieuwenhuizen, Maximal work extraction from finite quantum systems,
Europhys. Lett. \textbf{67}, 565 (2004).
\bibitem{Iida2011} K. Iida, Y. Qiu, and T. J. Sato, Dzyaloshinsky--Moriya
interaction and long lifetime of the spin state in the Cu$_3$ triangular
spin cluster by inelastic neutron scattering measurements, Phys. Rev. B
\textbf{84}, 094449 (2011).
\end{thebibliography}
\end{document}